\declaretheorem{example}
\declaretheorem{theorem}
\declaretheorem[sibling=theorem]{lemma}
\declaretheorem[sibling=theorem]{fact}
\theoremstyle{definition}
\declaretheorem{definition}
\declaretheorem{remark}
\DeclareMathOperator*{\argmax}{arg\,max}
\newcommand{\eps}{\epsilon}
\newcommand{\vc}{\boldsymbol}
\newcommand{\calM}{\mathcal{M}}
\newcommand{\calR}{\mathcal{R}}
\newcommand{\calU}{\mathcal{U}}
\newcommand{\neighbors}{\mathcal{N}}
\newcommand{\PPAD}{\textsf{PPAD}}
\newcommand{\FP}{\textsf{FP}}
\newcommand{\E}{\textsf{E}}
\title[Practical algorithms and experimentally validated incentives for equilibrium-based fair division (A-CEEI)]{Practical algorithms and experimentally validated incentives for equilibrium-based fair division (A-CEEI)}
\author{Eric Budish}
\email{eric.budish@chicagobooth.edu}
\affiliation{\institution{University of Chicago}\orcid{0000-0002-1483-8491}
\country{USA}}
\author{Ruiquan Gao}
\email{ruiquan@cs.stanford.edu}
\affiliation{\institution{Stanford University}\orcid{0009-0006-9837-8598}\country{USA}}
\author{Abraham Othman}
\email{abrahamo@wharton.upenn.edu}
\affiliation{\institution{University of Pennsylvania}\orcid{0000-0001-7992-4916}\country{USA}}
\author{Aviad Rubinstein}
\email{aviad@cs.stanford.edu}
\affiliation{\institution{Stanford University}\orcid{0000-0002-6900-8612}\country{USA}}
\author{Qianfan Zhang}
\email{qianfan@princeton.edu}
\affiliation{\institution{Princeton University}\orcid{0000-0003-3737-1545}\country{USA}}
\keywords{}
\begin{abstract}
    Approximate Competitive Equilibrium from Equal Incomes (A-CEEI) is an equilibrium-based solution concept for fair division of discrete items to agents with combinatorial demands. In theory, it is known that in asymptotically large markets:
    \begin{itemize}
        \item For incentives, the A-CEEI mechanism is Envy-Free-but-for-Tie-Breaking (EF-TB), which implies that it is Strategyproof-in-the-Large (SP-L).
        \item From a computational perspective, computing the equilibrium solution is unfortunately a computationally intractable problem (in the worst-case, assuming $\PPAD\ne \FP$).
    \end{itemize}
    
    We develop a new heuristic algorithm that outperforms the previous state-of-the-art by multiple orders of magnitude. This new, faster algorithm lets us perform experiments on real-world inputs for the first time. We discover that with real-world preferences, even in a realistic implementation that  satisfies the EF-TB and SP-L properties, agents may have surprisingly simple and plausible deviations from truthful reporting of preferences. To this end, we propose a novel strengthening of EF-TB, which dramatically reduces the potential for strategic deviations from truthful reporting in our experiments. 

    A (variant of) our algorithm is now in production: on real course allocation problems it is much faster, has {\em zero} clearing error, and has stronger incentive properties than the prior state-of-the-art implementation.

\end{abstract}
\begin{document}


\maketitle


\section{Introduction}
\label{sec:intro}

Competitive Equilibrium from Equal Incomes (CEEI)~\cite{Fol67,Var74,TV85} is an attempt to leverage the economic efficiency of market equilibria while preserving the \emph{ex post} fairness properties that come from equal incomes.
For complex preferences, however, CEEI does not necessarily exist.

Budish~\cite{Budish2011} developed a relaxation of CEEI, {\em Approximate CEEI (A-CEEI)}. He showed that an approximate equilibrium from approximately equal incomes always exists.
Budish~\cite{Budish2011} also showed if the perturbations to agents' incomes (henceforth {\em budgets}) are chosen at random, the mechanism satisfies an Envy-Free-but-for-Tie-Breaking (EF-TB) property, which by~\cite{AB18} implies that it is Strategyproof in the Large (SP-L); i.e.~as the number of agents tends to infinity, the mechanism becomes approximately strategyproof.

Budish's original theoretical work invoked a fixed-point theorem to prove existence. That makes it inherently non-constructive. It was shown in~\cite{OthmanPR16} that finding an A-CEEI is \PPAD-complete, even if we allow constant budget inequality~\cite{Rub18}. A-CEEI is therefore similar to many other economic equilibria whose existence rely on non-algorithmic proofs, the most famous of which is the Nash equilibrium~\cite{DaskalakisGP09,ChenDT09}. 

Despite this theoretical infeasibility, a heuristic algorithm exists that solves the problem adequately in practice~\cite{BudishCKO17}. 
%
The existing heuristic algorithm makes A-CEEI a practical solution concept for settings that require efficiency and fairness but for which the use of real money is impractical or repugnant~\cite{Roth07}. The setting  where A-CEEI has seen the most practical application is in the allocation of courses to students.
 Student preferences are often quite complex in course allocation. Students typically demand many courses and individual courses could be complements or substitutes depending on the bundle of other courses in a student's schedule. Course allocation---particularly in professional schools---also tends to be a challenging allocation problem, as the most popular ``star courses" tend to have far more demand than supply.
%

In this paper we explore the course allocation problem experimentally using real data from properly motivated student preferences over schedules of professional school courses, using data from the commercial implementation of A-CEEI fielded by Cognomos.




\subsection{Contribution I: A (much) faster heuristic algorithm for computing A-CEEI}\label{sub:contribution1}

\begin{figure}[t]
\centering
\includegraphics[width=.6\textwidth]{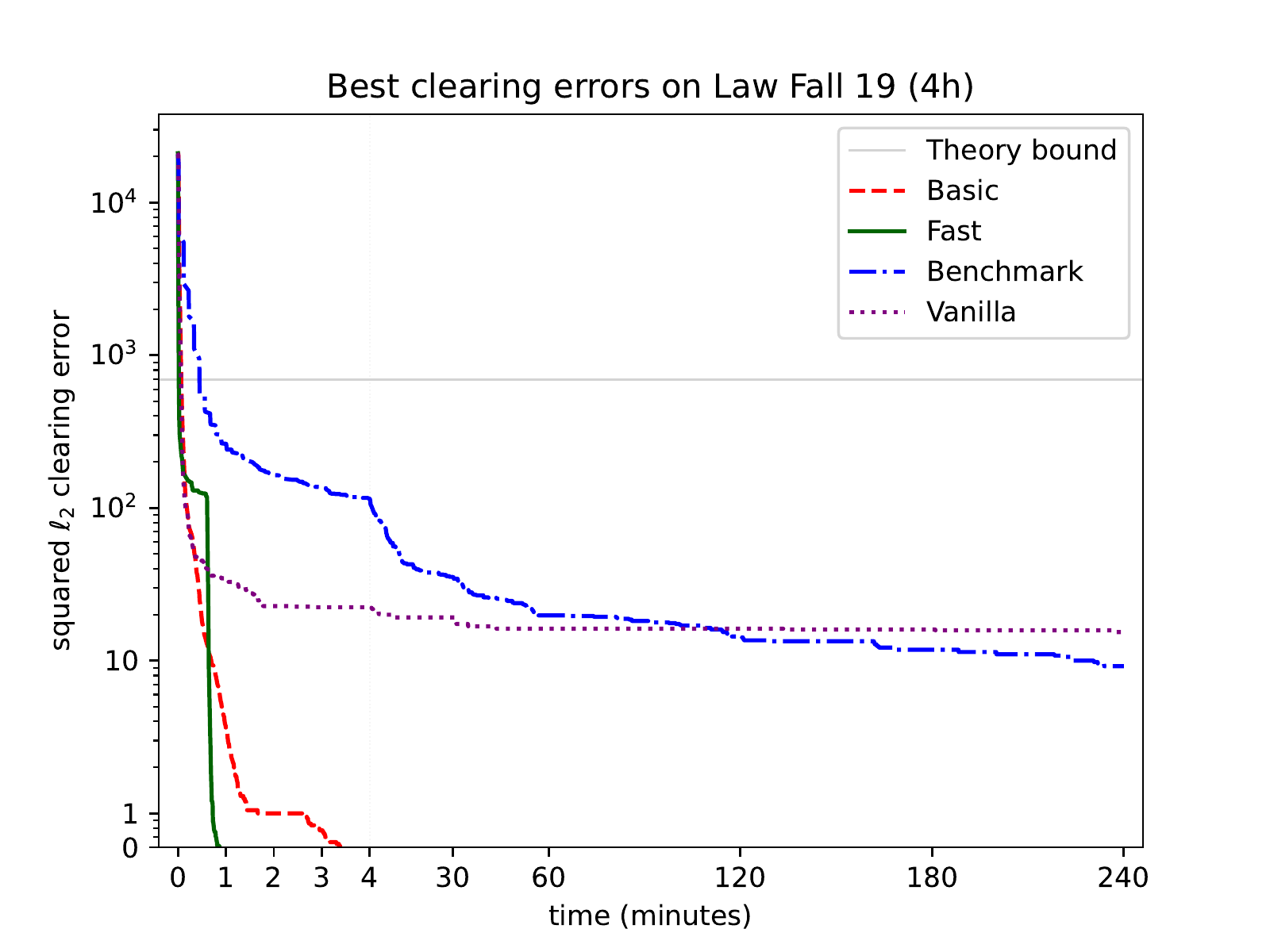}
\caption{For each of the algorithms tested and time $t$ (up to 4-hours), we plot the best clearing error obtained by the algorithm up to time $t$. Note that our final algorithm (Fast) finds a zero-clearing-error solution after less than 1 minutes, while previous commercial state-of-the-art (Benchmark) does not converge after 4 hours. See Section~\ref{sec:perform} for details.}
\label{fig:progress-of-algos-error-4h}
\end{figure}

Our first contribution is an improved heuristic algorithm for computing A-CEEI. Our algorithm outperforms the commercial state-of-the-art by several orders of magnitude in both the quality of produced solution as well as runtime on real course allocation problems, see e.g.~Figure~\ref{fig:progress-of-algos-error-4h}.


We highlight some of our algorithmic findings here, with details in Sections~\ref{sec:algo} and \ref{sec:perform}. Further details about the previous state-of-the-art algorithm (henceforth {\em benchmark}), can also be found in academic publications~\cite{OthmanSB10, BudishCKO17}. 
At a high level, both our algorithm and the benchmark, perform variants of the classic tatonnement algorithm: increase the price of over-demanded courses, and decrease the prices of under-demanded courses.

One of the main technical innovations introduced in the benchmark algorithm was the use of individual price adjustments: on some iterations, the algorithm can make a larger improvement on the clearing error by adjusting the price of a single course rather than all the courses simultaneously. Based on extensive experiments on randomly generated data (real data from students bidding was not yet available at the time),~\cite{OthmanSB10} reported that the algorithm was much more likely to find a solution with acceptable clearing error when mixing individual price adjustments and full tatonnement updates. 
Our first algorithmic insight is to largely {\em reverse this finding of~\cite{OthmanSB10}}. We observe that on real world instances, it is in fact much more efficient to use only full tatonnement updates, even when they locally increase the clearing error. We discuss evidence, explanations, and caveats of  this finding in Section~\ref{sec:perform}.

Our second (and perhaps more interesting) algorithmic insight focuses on the ``end game'', when the algorithm is already close to a reasonable clearing error. Here, we show that making tiny-but-cleverly-optimized perturbations to the {\em budgets} (rather than prices) can quickly lead to the holy grail of zero clearing error. Before going into the details, we remark that (i) in the practical benchmark implementation, students budgets are already perturbed at random (and with larger perturbations); (ii) our insight  is inspired directly by the non-algorithmic existence proof in~\cite{Budish2011}, which perturbs the students' budgets twice: once to guarantee a desirable fixed point of the tatonnement correspondence, and a second time to break ties between marginal students at the fixed point. 

To describe the budget perturbations, let's start from the end: we would like to return a vector of course prices and almost-equal student budgets and (ideally) zero clearing error. Towards this, at each iteration of the tatonnement the algorithm solves an (NP-hard but fast in practice) integer program to look for a small budgets perturbation that will nudge students' demand bundles to zero the overall market clearing error. More generally, at each iteration we can find the {\em optimal budgets perturbation}, aka the one that minimizes the clearing error. Since our ultimate goal is to find a price vector that works well with an optimal budget perturbation, we use the clearing error with respect to the optimal budget perturbation to perform the next iteration of tatonnement.

While the idea of using optimal budget perturbations is extremely effective algorithmically, we have to be careful not to open opportunities for manipulability: although budget increases are very small, doing it based on student's reported course preferences could open an opportunity for strategic reporting. To this end, we encode in our integer program the same EF-TB constraints that guarantee the SP-L (Strategyproof in the Large) condition for the original A-CEEI mechanism. Although this makes the integer program larger, we can solve it very efficiently (see Section~\ref{sub:fast-EF-TB} for some optimizations).


\subsection{Contribution II: Empirically evaluating the incentives of A-CEEI}

In theory, the A-CEEI mechanism has a very desirable property: it is {\em Strategyproof in the Large (SP-L)}; i.e.~the expected utility any student can gain by misreporting their true preferences diminishes as the number of students goes to infinity~\cite{AB18}%
\footnote{The informal intuition is that students who are ``price-takers'', i.e.~they regard the prices of courses as set ``by the market'', receive the optimal schedule they can afford and thus have no incentive to misreport their preferences.}. 
What does this theory guarantee for realistic schools with a finite number of students? 
Unfortunately, the answer is not much: the formal SP-L convergence guarantee (~\cite[Theorem 1]{AB18}) requires the number of students to be much larger than the number of possible types. For the general A-CEEI mechanism%
\footnote{In practice, students' reporting language is often more restricted, but still the number of students is always much smaller than the number of possible types.}, a type consists of a ranking of all possible schedules; the number of possible schedules is exponential in the number of courses, and the number of ways they can be ranked adds another layer of exponentiation. In other words, until the number of students is {\em doubly-exponential} in the number of courses, the formal convergence results for SP-L mechanisms are meaningless. 

The situation is further complicated by the computational intractability of the A-CEEI problem. %
A natural approach for dealing with the large number of possible deviations is to directly analyze the way students could manipulate the algorithm's choice of equilibrium. This has been fruitful in analyzing simple algorithms~\cite{Vitercik21-thesis,AKLLM20}. But because of the \PPAD-hardness of the A-CEEI problem we resort to a highly nontrivial heuristic algorithm  (see Contribution I). Theoretically analyzing how a possible misreport of preferences would affect the trajectory in price space taken by this algorithm seems far from tractable.

Going beyond intractable theoretical guarantees on students' incentives, our approach is to empirically evaluate them. 
Our main question in this part is:
\begin{quote}
    In practice, can students gain in the A-CEEI mechanism by manipulating their reported preferences?
\end{quote}

Here, it is paramount that we have our novel fast heuristic algorithm: Previously, it was barely possible to compute one allocation, so re-computing allocations for each possible deviation was completely out of the question. Anecdotally, this is the reason why the computational exploration of the same issue in Budish's original paper was limited to  tiny examples with 2 students and 4 courses~\cite[Footnote 31]{Budish2011-old}.

Still, no matter how fast our equilibrium computation algorithm, we cannot possibly enumerate all possible deviations (again, in theory their number is doubly exponential in the number of courses). However, it is reasonable to expect that computationally- and informationally-bounded students also cannot test every possible deviation. We thus model a strategic student using a simple hill-climbing algorithm that adjusts the single course weights starting from the original truthful report. We also consider different restrictions on the student's information about the market, as detailed in Definition~\ref{def:manipulation}.

In Section~\ref{sec:manip} we use our manipulation-finding algorithm in combination with our fast A-CEEI finding algorithm to explore the plausibility of effective manipulations for students bidding in A-CEEI. Originally, we had expected that since our mechanism satisfies the EF-TB and SP-L properties, it would at least be practically strategyproof --- if even we don't really understand the way our algorithm chooses among the many possible equilibria, how can a student with limited information learn to strategically bid in such a complex environment?

Indeed, in 2 out of 3 schools that we tested, our manipulation-finding algorithms finds very few or no statistically significant manipulations at all. However, when analyzing the 3rd school,  we stumbled upon a simple and effective manipulation for (the first iteration of) our mechanism. We emphasize that although the manipulation is simple in hindsight, in over a year of working on this project we failed to predict it by analyzing the algorithm --- {\em the manipulation was discovered by the algorithm}. 

Inspired by this manipulation, we propose a natural strengthening of envy-free (discussed below), which we call {\em contested-envy free}. We encode the analogous {\em contested EF-TB} as a new constraint in our algorithm (specifically, the integer program for finding optimal budget perturbations). Fortunately, our algorithm is still very fast even with this more elaborate constraint. And, when we re-run our manipulation-finding experiments, we observe that contested EF-TB significantly reduces the potential for manipulations in practice.

\subsection{Contribution III: Contested Envy Free (but for Tie Breaking)}
In this section we gradually build towards our new notions of contested-envy free, and contested EF-TB.
We begin with the basic notion of {\em Envy-Free (EF)}: an allocation is said to be envy free if no student $i$ prefers the schedule allocated to another student $i'$.

In the course allocation problem, due to the challenges of integrality constraint and combinatorial demand, EF allocations rarely exist, but A-CEEI allocations are guaranteed to satisfy important relaxations of EF such as EF-TB (discussed below). 
In contrast, contested-EF is a {\em strengthening} of EF. To motivate the distinction between EF and contested-EF, consider the following anecdote. (This anecdote is for illustration purposes only; in Section~\ref{sec:anecdotes} we discuss examples of courses and students derived from manipulations found by our algorithm on instances with real preferences.)

\begin{example}[Contested envy free]\hfill

Eric drives a Honda and Mohammad drives a Porsche. Eric would rather have Mohammad's Porsche than his Honda.  However, Eric doesn't {\em envy} Mohammad, because his kids are in his Honda, and he wouldn't trade the bundle of $\{\text{the Honda and his kids}\}$ for Mohammad's Porsche (with no kids). 
\end{example}

Eric loves his own kids, but nobody else would want them in their car; we thus say that they're {\em uncontested} for understanding the envy between Eric and Mohammad. 
Formally, in the the specific context of A-CEEI%
\footnote{More generally when prices aren't available it is natural to extend the notion of ``uncontested'' to capture under-demanded goods.}
we say that contested are the goods with strictly positive price, and goods with zero price are uncontested.
Considering uncontested goods for the purposes of determining envy is an obvious source of incentive issues: Eric can always report a low value for his kids to claim to envy Mohammad's allocation. Because they're uncontested, Eric is still guaranteed to have them in any Pareto optimal allocation.

If we restrict our attention to {\em contested} goods, Eric would indeed rather trade his Honda for Mohammad's Porsche.
More generally, we say that Eric {\em contested-envies} Mohammad if Eric prefers any subset of $\{\text{Mohammad's allocation}\} \cup \{\text{uncontested goods}\}$ over his own allocation. Notice that contested EF is a strengthening of EF. 

As mentioned before, EF allocations rarely exist in the course allocation problem. 
Azevedo and Budish~\cite{AB18} relax the notion of EF to allow {\em tie-breaking} (EF-TB): The students are ranked at random%
\footnote{In practice a combination of seniority and random ranking may be used.}, and the allocation is said to satisfy EF-TB if no student $i$ envies any lower-ranked students. (However, $i$ may envy higher-ranked students.)
EF-TB is not a very satisfying fairness criterion%
\footnote{For fairness,~\cite{Budish2011} introduced the notion of Envy-Free-up-to-1-good (EF1) and proved that it is satisfied by A-CEEI.}, but it does imply strategyproof-in-the-large (SP-L)~\cite{AB18}. 

A-CEEI allocations satisfy EF-TB when the budgets are assigned at random because no student can envy another student with a lower budget (if $i'$'s budget is lower, then $i'$'s schedule must also be affordable for $i$). When we introduce optimal budget perturbations, we simply encode  EF-TB as a constraint in our perturbation-finding integer program: if $i'$'s initial tie breaker is lower than $i$'s, then the EF-TB constraint is that $i$ cannot envy $i'$ in the perturbed economy. While in theory this approach satisfies SP-L, in practice it can open the door to manipulations; see Sections~\ref{sec:anecdotes} and~\ref{sec:manip} for discussion and evidence from experiments.

We can generalize EF-TB to {\em contested EF-TB} in the natural way: no student can contested-envy a lower ranked student. 
We henceforth refer to the EF-TB criterion as {\em classic EF-TB} to distinguish it from contested EF-TB. 

Note that contested EF-TB is a strengthening of classic EF-TB, so it also implies SP-L. Observe also that if all the budgets respect the random TB rule, then any A-CEEI allocation is contested EF-TB. In our new algorithm, because of the optimal budget perturbations, the budgets  may not respect the original TB order; instead, we encode the contested EF-TB constraint in the integer program that optimizes over budget perturbations.  

In Section~\ref{sec:manip} we bring a quantitative analysis of our manipulation-finding algorithm: we give evidence from experiments on real students bids, suggesting that when we enforce contested EF-TB, it is fairly hard to find successful manipulations.

\subsection{Limitations of our approach}
\paragraph{Modeling students' valuations} Throughout the paper we take students' reported preferences as their true valuations. In practice, students do not perfectly report their full preferences~\cite{BK22}. In fact, students usually have a limited interface, e.g. they may rate courses as ``favorite, great, good, fair'', and there is a hand-tuned formula converting  those ratings to utilities. Concurrent work by~\cite{SZWS23} focuses on the orthogonal direction of better eliciting and modeling of students' preferences using neural networks. Equilibrium computation is a major bottleneck of their approach, and we leave it to future work to see if our respective algorithms can be combined effectively.

\paragraph{Variance across markets} We are very fortunate to be able to test our algorithms on real data from a few different programs shared with us by Cognomos. There seem to be large variance between instances from different programs. In particular, in one school we find significant manipulations for the no-EF-TB and classic-EF-TB constraints, whereas in others those variants were also hard to manipulate. Running times also vary greatly between instances. However, two trends are consistent between all the instances we tested: (i) our algorithm is much faster than the previous state of the art, and (ii) contested EF-TB seems to have desirable incentive properties. We plan to test on more datasets as they become available. 

\paragraph{Limitations of the manipulation finding algorithm}
Our algorithm for discovering profitable manipulations is a highly imperfect surrogate to the real question of {\em can real students find robust manipulations?} On one hand, students can come up with more complicated manipulations than the ones considered by our algorithm; on the other hand, the algorithm has more information than any single student would normally have.  A particular issue is that we run our experiment for exploring profitable deviations with very small sample sizes: even with a very fast algorithm and restriction to very simple manipulations, we have to restrict our experiment to a as few as five samples, which is quite noisy. (We later validate every candidate manipulation with at least 100 iterations.) To make sure that our manipulation-finding algorithm still makes sense, we benchmark it on the HBS mechanism which is known to be manipulable~\cite{BC12}; indeed we find significant manipulations are possible with HBS on all instances.

\subsection{Conclusion}
In this work, we give a significantly faster algorithm for computing A-CEEI. Kamal Jain's famous formulation ``if your laptop cannot find it then neither can the market''~\cite{AGT-book} 
 was originally intended as a negative result, casting doubt on the practical implications of many famous economic concepts because of their worst-case computational complexity results. Even for course allocation, where a heuristic algorithm existed and worked in practice, Jain's formulation seemed to still bind, as solving A-CEEI involved an intense day-long process with a fleet of high-powered cloud servers operating in parallel.
The work detailed in this paper has significantly progressed what laptops can find: even the largest and most challenging real course allocation problems we have access to can now be solved in under an hour on a commodity laptop.

This significant practical improvement suggests that the relationship between prices and demand for the course allocation problem---and potentially other problems of economic interest with complex agent preferences and heterogeneous goods---may be much simpler than has been previously believed and may be far more tractable in practice than the worst-case theoretical bounds. Recalling Jain's dictum, perhaps many more market equilibria can be found by laptops---or, perhaps, Walras's original and seemingly naive description of how prices iterate in the real world may in fact typically produce approximate equilibria.

Our fast algorithm also opens the door for empirical research on A-CEEI, because we can now solve many instances and see how the solution changes for different inputs. We took it in one direction: empirically investigating the incentives properties of A-CEEI for the first time. For course allocation specifically, this faster algorithm opens up new avenues for improving student outcomes through experimentation. For instance, university administrators often want to subsidize some group of students (e.g., second-year MBA students over first-year MBA students), but are unsure how large of a budget subsidy to grant those students to balance equity against their expectations. Being able to run more assignments with different subsidies can help to resolve this issue.


\begin{remark}[Zero vs small clearing error]\label{remark:zero-error}

We highlight that our algorithm is not only fast - it also finds allocations with {\em zero} clearing error. Even the non-algorithmic existence proof of~\cite{Budish2011} only guarantees a {\em small} clearing error.

While the previous heuristic algorithm was able to find adequate allocations in practice, it introduces some additional potential manipulability. That algorithm was a three-stage process. The first stage finds an approximate equilibrium, which equilibrium will tend to have both undersubscription in positive-price courses as well as oversubscription. This is the approximate equilibrium guaranteed to exist by~\cite{Budish2011}. The second stage progressively increases course prices to eliminate all oversubscription. This tends to increase total clearing error but makes the solution implementable, since all of the error comes from underallocating seats in valuable courses.  The final stage is a ``backfill'' process, where students are sequentially allocated extra budget and allowed to spend that budget on courses that are undersubscribed.

While the backfill process substantially reduces the deadweight loss of computed assignments it may have problematic incentive and fairness properties. Students who get first shot at spending that extra budget in the backfill may be able to add an excellent course to their schedule. In particular, the backfill process is not known to satisfy properties like (contested) EF-TB or SP-L. Observe, however, that the backfill process is only necessary because of the clearing error found in the original approximate equilibrium. In contrast, so far our new algorithm has found A-CEEI with {\em zero} clearing error on every instance it has encountered. This completely obviates the need for the second and third stages of~\cite{BudishCKO17}: with no clearing error, there are no seats that need to be backfilled.
\end{remark}

\begin{remark}[Social welfare]
Although our main focus is on improving the algorithmic efficiency and incentives guarantees of the A-CEEI mechanisms, in Appendix~\ref{app:econeff} we compare the social welfare of our algorithm and the previous approach. Although the results aren't as decisive as on other metrics, we observe that our algorithm tends to give better allocations in terms of (utilitarian and Nash) social welfare.
\end{remark}

Discussion of additional related work can be found in Appendix~\ref{app:related}.

\label{sec:prelim}

\section{Preliminaries}

\begin{definition}[The course allocation market]
A {\em course allocation market} $\big(\vc{u}=(u_i)_{i=1}^n,\vc{c}=(c_j)_{j=1}^m\big)$ consists of:
\begin{itemize}
    \item $m$ courses, where each course $j\in [m]$ has an integral amount of capacity $c_j$;
    \item $n$ students, where each student $i\in [n]$ has a utility function $u_i$ over each course bundle.
\end{itemize}
\end{definition}

\begin{definition}[Allocation, excess demand, and market-clearing error]
Fix a market $(\vc{u},\vc{c})$, course prices $\vc{p}=(p_j)_{j=1}^m$, and student budgets $\vc{b}=(b_i)_{i=1}^n$, the allocation function $\vc{a}=(\vc{a}_i)_{i=1}^n$ is defined as
$$\vc{a}_i(\vc{u},\vc{p},\vc{b})= \argmax_{\vc{x} \in 2^{[m]}, \vc{p}\cdot\vc{x}\le b_i} u_i(\vc{x}).$$
We further define the {\em excess demand} function $\vc{z}=(z_j)_{j=1}^m$ as 
\begin{equation*}
    z_j(\vc{u},\vc{c},\vc{p},\vc{b}) = \sum_{i=1}^n a_{ij}(\vc{u},\vc{p},\vc{b}) - c_j,
\end{equation*}
and the {\em clipped excess demand} function $\vc{\tilde{z}}=(\tilde{z}_j)_{j=1}^m$ as
\begin{equation*}
    \tilde{z}_j(\vc{u},\vc{c},\vc{p},\vc{b}) = \begin{cases}
    z_j(\vc{u},\vc{c},\vc{p},\vc{b}) & \text{if $p_j>0$,}\\
    \max\{0,z_j(\vc{u},\vc{c},\vc{p},\vc{b})\} & \text{if $p_j=0$.}
    \end{cases}
\end{equation*}
And we define the {\em market-clearing error} as $\|\vc{\tilde{z}}(\vc{u},\vc{c},\vc{p},\vc{b})\|_2$.
\end{definition}

\begin{definition}[Approximate competitive equilibrium from equal incomes (A-CEEI)]
For constant $\alpha,\beta>0$ and market $(\vc{u},\vc{c})$, we say a pair of prices and budgets $(\vc{p},\vc{b})$ forms an {\em $(\alpha, \beta)$-CEEI} if there is
\begin{itemize}
    \item small market-clearing error: $\|\vc{\tilde{z}}(\vc{u},\vc{c},\vc{p},\vc{b})\|_2 \le \alpha$;
    \item small budget perturbation: $b_i \in [1,1+\beta] \;\;\; \forall i$. 
\end{itemize}
\end{definition}



\subsection{(Contested) Envy-Free-but-for-Tie-Breaking}\label{sub:prelim-EFTB}

We formally define the notions of EF-TB and contested EF-TB. For both, it is important to make the distinction between the initial budgets $\vc{b_0}$ that are determined exogenously (e.g.~at random), and the final budgets $\vc{b}$ which may also depend on the reported preferences. In both cases, students' initial budgets play a second role in determining the direction in which envy is allowed; in particular the initial budgets are assumed to be distinct (hence ``tie-breaking'').

\begin{definition}[Envy-Free-but-for-Tie-Breaking (EF-TB)]
\label{def:ef-tb}
Given an initial budget $\vc{b_0}$, for a market $(\vc{u},\vc{c})$, price $\vc{p}$, and budget $\vc{b}$, the allocation $\vc{a}(\vc{u},\vc{p},\vc{b})$ is called {\em EF-TB with respect to budget $\vc{b_0}$}, if for all student $i,j\in [n]$ such that $b_{0,i} > b_{0,j}$, we have $u_i(a_i(\vc{u},\vc{p},\vc{b})) > u_i(S)$ for all bundle $S \subseteq a_j(\vc{u},\vc{p},\vc{b})$.

Furthermore, we say that an A-CEEI algorithm $A(\vc{u},\vc{c},\vc{b_0})$ is EF-TB, if for any market $(\vc{u},\vc{c})$ and any initial budget $\vc{b}_0$, the final allocation $\vc{A}(\vc{u},\vc{c},\vc{b_0})$ is always EF-TB with respect to $\vc{b_0}$.
\end{definition}


\begin{definition}[Contested Envy-Free-but-for-Tie-Breaking (Contested EF-TB)]
\label{def:strong-eftb}
Given an initial budget $\vc{b_0}$, for a market $(\vc{u},\vc{c})$, price $\vc{p}$, and budget $\vc{b}$, the allocation $\vc{a}(\vc{u},\vc{p},\vc{b})$ is called {\em contested EF-TB with respect to budget $\vc{b_0}$ and price $\vc{p}$}, if for all student $i,j\in [n]$ such that $b_{0,i} > b_{0,j}$, we have $u_i(a_i(\vc{u},\vc{p},\vc{b})) > u_i(S)$ for all bundle $S$ such that $S \subseteq a_j(\vc{u},\vc{p},\vc{b}) \cup \{k \in [m] : p_k=0\}$.

Furthermore, we say that an A-CEEI algorithm $A(\vc{u},\vc{c},\vc{b_0})$ is contested EF-TB, if for any market $(\vc{u},\vc{c})$ and any initial budget $\vc{b}_0$, the final allocation $\vc{A}(\vc{u},\vc{c},\vc{b_0})$ is always contested EF-TB with respect to $\vc{b_0}$ and $\vc{p}$.
\end{definition}

\subsection{Utility functions}
\label{sub:utility}

While the A-CEEI existence works for general ordinal preferences, we focus on the following restricted class of utility functions. This class is consistent with most utilities reported in practice, which are typically taken to be additively-separable utilities, (i) with a preference for schedules satisfying a minimum-number-of-course-units requirements, and (ii) subject to satisfying simple constraints, e.g.~timing and curriculum conflicts. Using the language of Operations Research, schedule validity is a hard constraint, while having a schedule meet a student's requirements is a soft constraint.  (The problem remains \PPAD-complete in the worst-case when restricted to this class; see Appendix~\ref{app:hard}.) 

Formally, every utility function $u$ can be described by a tuple $(\vc{w} \in \mathbb{R}^m, valid, req: 2^{[m]}\to \mathbb{N})$, such that for every possible bundle $\vc{x} \in 2^{[m]}$,
\begin{gather} \label{eq:original-utilities}
u(\vc{x})=\begin{cases} 
\vc{w}\cdot\vc{x} + B & valid(\vc{x})=1, req(\vc{x})=1\\
\vc{w}\cdot\vc{x} & valid(\vc{x})=1, req(\vc{x})=0\\
-\infty & valid(\vc{x})=0
\end{cases}
\end{gather}
where $B$ is some large number such that $B>\|\vc{w}\|_1$.
The function $valid$ and $req$ also follow some structures so that they can be efficiently represented (e.g., in a mixed-integer program).




\section{Our algorithm}
\label{sec:algo}
In this section we describe our fast heuristic algorithm for computing A-CEEI. 
We begin with a description of the basic algorithm (Subsection~\ref{sub:basic}). 
We then move to describe some further optimizations that we found helpful when solving larger instances (Subsection~\ref{sub:Shortcuts}).

\subsection{Our Basic Algorithm}\label{sub:basic}
All our algorithms take as inputs the students' reported preferences $\vc{u}$, the course capacities $\vc{c}$, and initial budgets $\vc{b_0}$ that are determined at random, sometimes with a bonus for seniority. 
Our basic algorithm proceeds as follows: At each iteration, the algorithm looks for the optimal budget perturbation given current prices; then it computes the market clearing error for this optimal budget perturbation; finally, it updates the prices according to tatonnement rule. The algorithm terminates when it reaches zero clearing error\footnote{It is also a good idea to enforce a time limit as a solution with zero clearing error is not even guaranteed to exist, but in practice we managed to find solutions with zero clearing error on all the instances we encountered.}.
The pseudocode is given in Algorithm~\ref{alg:tatonnement}.


\begin{algorithm}[t]
\caption{Find an A-CEEI with (contested) EF-TB property}
\label{alg:tatonnement}
\begin{algorithmic}
\State \textbf{Inputs:} students' utility functions $\vc{u}$, course capacities $\vc{c}$, initial budgets $\vc{b_0}$
\State \textbf{Outputs:} final prices $\vc{p}^*$ and budgets $\vc{b}^*$
\State \textbf{Parameters:} step size $\delta$, maximum budget perturbation $\epsilon$, type $t$ of the EF-TB constraint used ($0$ for no EF-TB constraint, $1$ for EF-TB constraint, and $2$ for contested EF-TB)
\State \textbf{Algorithm:}
\begin{enumerate}
  \item Let $\vc{p} \gets \vc{0}$.
  \item \label{step-np} {\bf $\epsilon$-budget perturbation}: find budgets $\vc{b}$ such that the market-clearing error $\|\vc{\tilde{z}}(\vc{u},\vc{c},\vc{p},\vc{b})\|_1$ is minimized under the following constraints:
  \begin{enumerate}
    \item The maximum perturbation $\|\vc{b}-\vc{b}_0\|_{\infty} \le \epsilon$;
    \item Allocation $\vc{a}(\vc{u},\vc{p},\vc{b})$ is EF-TB with respect to $\vc{b}_0$ if $t=1$, or contested EF-TB with respect to $\vc{b}_0$ and $\vc{p}$ if $t=2$.
  \end{enumerate}
  Furthermore, we shall use $\|\vc{b}\|_1$ as the tie-breaker to guarantee the uniqueness of the solution, i.e., always picking $\vc{b}$ with minimum $\|\vc{b}\|_1$ among all optimal solutions.
  \item If $\|\vc{\tilde{z}}(\vc{u},\vc{c},\vc{p},\vc{b})\|_2=0$, terminate with $\vc{p}^*=\vc{p}, \vc{b}^*=\vc{b}$.
  \item Otherwise, update $\vc{p}\gets \vc{p}+\delta\vc{\tilde{z}}(\vc{u},\vc{c},\vc{p},\vc{b})$, then go back to step 2.
\end{enumerate}
\end{algorithmic}
\end{algorithm}

\subsubsection*{Computing the $\epsilon$-budget perturbation.} To compute the $\epsilon$-budget perturbation in the second step, we should optimize among all possible budget perturbations so that the resulting clearing error is minimized. Observe that for any fixed price vector, the demand of any student can only change on some budgets that are the sum of some prices; and the sum of prices is always a multiple of the fixed step size $\delta$. 

Therefore, we can always partition the interval $[b_{0i}\pm \epsilon]$ of Student $i$'s possible budgets into $k_i\leq \frac{2\epsilon}{\delta} + 1$ sub-intervals $(\underline{b_{i\ell}}, \overline{b_{i\ell}}]$, such that  $i$'s demand bundle $\vc{a_{i\ell}}$ is constant on each sub-interval:
\begin{align}
    \label{eqn:array-of-demands}
    \big\{(\vc{a_{i\ell}}, \underline{b_{i\ell}}, \overline{b_{i\ell}}): \text{student $i$'s demand is $\vc{a_{i\ell}}$ for every budget in $[\underline{b_{i\ell}}, \overline{b_{i\ell}}]$}\big\}_{\ell=1}^{k_i},
\end{align}
where $\underline{b_{i\ell}}, \overline{b_{i\ell}}$s are multiples of $\delta$ in $b_{0i}\pm \epsilon$. 

Once we compute these arrays, we can solve for  the optimal $\epsilon$-budget perturbation using the following integer linear program:
\begin{align}
    &\tag{\textsc{Budget-Perturb-ILP}} \label{eqn:budget-perturb-ilp}
    \\
     &\min \quad  \|\vc{z}\|_1 & && \text{(Minimize clearing error)}
    \notag
    \\
    \text{s.t.}\quad & \sum_{i\in[n]} \sum_{\ell \in[k_i]} x_{i\ell }\cdot  a_{i\ell j} = c_{j}+z_{j} & \forall j \in [m], p_{j}>0 && \text{(Clearing error: $p_j>0$)}
    \notag
    \\
    & \sum_{i\in[n]} \sum_{\ell \in[k_i]} x_{i\ell }\cdot  a_{i\ell j} \leq c_{j}+z_{j} & \forall j \in [m], p_{j}=0 && \text{(Clearing error: $p_j=0$)}
    \notag
    \\
    & \sum_{\ell \in [k_i]} x_{i\ell} = 1 & \forall i\in [n] && \text{(1 schedule per student)}
    \notag
    \\
    & x_{i\ell}\in \{0,1\} & \forall i\in [n], \ell \in [k_i] && \text{(Integral allocations)}
    \notag
\end{align}
We add the following constraint to ensure (contested) EF-TB: For any student $i, i'\in [n]$ such that $i$'s priority is higher than $i'$ (i.e. $b_{0i}>b_{0i'}$), and any $\ell\in [k_i], \ell'\in [k_{i'}]$, if the (contested) EF-TB is violated when student $i$ is allocated $\vc{a_{i\ell}}$ and student $i'$ is allocated $\vc{a_{i'\ell'}}$, i.e. according to Definition~\ref{def:ef-tb} and~\ref{def:strong-eftb}, then we prevent simultaneously allocating $\vc{a_{i\ell}}$ to student $i$ and allocating $\vc{a_{i'\ell'}}$ to student $i'$.

\begin{align}
  \label{eqn:eftb-check}
  x_{i\ell} + x_{i'\ell'} \leq 1 \text{ if }
  \begin{cases}
   \exists S\subseteq \vc{a_{i'\ell'}}, u_i(\vc{a_{i\ell}})< u_i(S) \;\;\; \text{(for classic EF-TB)}\\
   \exists S\subseteq \vc{a_{i'\ell'}}\cup \{k:p_k=0\}, u_i(\vc{a_{i\ell}})< u_i(S) \;\;\; \text{(for contested EF-TB)}
  \end{cases}.
\end{align}

\begin{remark}
  Solving integer programs is NP-hard in general, 
  but we can solve~\eqref{eqn:budget-perturb-ilp} quite fast in practice with modern SAT solvers. 
\end{remark}

\begin{fact}
Fix parameters $\delta,\epsilon > 0$ and $t \in \{0,1,2\}$.
For a market $(\vc{u},\vc{c})$ and initial budgets $\vc{b_0}$, if Algorithm~\ref{alg:tatonnement} terminates on input $\vc{u},\vc{c},\vc{b_0}$, its output $(\vc{p}^*,\vc{b}^*)$ forms a $\Big(0,\frac{\max_{i\in [n]} (b_0)_i+\eps}{\min_{i\in [n]} (b_0)_i-\eps}\Big)$-CEEI and the final allocation $\vc{a}(\vc{u},\vc{p}^*,\vc{b}^*)$ is EF-TB with respect to $\vc{b_0}$ if $t=1$, or contested EF-TB with respect to $\vc{b_0}$ and $\vc{p}$ if $t=2$.
\end{fact}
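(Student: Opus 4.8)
The plan is to verify the three conclusions one at a time, since each follows from the structure of Algorithm~\ref{alg:tatonnement} and the relevant definition once we record a single scaling-invariance property of the demand. For the \emph{zero clearing error}, I would simply observe that termination can only occur through Step~3, whose guard is exactly $\|\vc{\tilde z}(\vc u,\vc c,\vc p,\vc b)\|_2 = 0$. Setting $\vc p^\ast=\vc p$ and $\vc b^\ast=\vc b$ therefore gives $\|\vc{\tilde z}(\vc u,\vc c,\vc p^\ast,\vc b^\ast)\|_2 = 0$, which matches the market-clearing requirement of the $(\alpha,\beta)$-CEEI definition with $\alpha=0$.

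For the \emph{budget bound} I would first record the key invariance: for any $\lambda>0$ the constraint $\vc p\cdot \vc x \le b_i$ is equivalent to $(\lambda\vc p)\cdot\vc x \le \lambda b_i$, so $\vc a_i(\vc u,\lambda\vc p,\lambda\vc b)=\vc a_i(\vc u,\vc p,\vc b)$; consequently the allocation, the clearing error, and the set of zero-priced goods (positive scaling preserves the sign of each price) are all unchanged. Now constraint~2(a) of Step~\ref{step-np} guarantees $\|\vc b^\ast - \vc b_0\|_\infty \le \eps$, so each $b_i^\ast$ lies in $[\min_i (b_0)_i-\eps,\ \max_i (b_0)_i+\eps]$. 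Dividing $(\vc p^\ast,\vc b^\ast)$ by the positive quantity $\min_i (b_0)_i-\eps$ (using $\eps<\min_i (b_0)_i$) rescales every budget into $[1,R]$ with $R=\frac{\max_i (b_0)_i+\eps}{\min_i (b_0)_i-\eps}$. Since $[1,R]\subseteq[1,1+R]$, the rescaled budgets lie in $[1,1+\beta]$ for $\beta=R$, and by the invariance above this rescaling alters neither the (zero) clearing error nor the EF-TB properties, establishing the claimed $(0,R)$-CEEI.

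For \emph{(contested) EF-TB}, note that by Step~\ref{step-np} the returned $\vc b^\ast$ is a feasible solution of \eqref{eqn:budget-perturb-ilp} together with constraint~\eqref{eqn:eftb-check} for the relevant $t$. I would first argue consistency of the encoding: the sub-interval decomposition~\eqref{eqn:array-of-demands} ensures that the recovered budget of each student $i$ lies in the sub-interval $\ell$ selected by $x_{i\ell}=1$, so the true demand $\vc a_i(\vc u,\vc p^\ast,b_i^\ast)$ equals $\vc a_{i\ell}$. Then for every pair $i,i'$ with $b_{0i}>b_{0i'}$, constraint~\eqref{eqn:eftb-check} forbids simultaneously selecting any $(x_{i\ell},x_{i'\ell'})$ for which the (contested) envy condition holds, so no violating pair survives in the returned solution; reading this back through Definition~\ref{def:ef-tb} (for $t=1$) or Definition~\ref{def:strong-eftb} (for $t=2$) yields the claimed property, and for $t=0$ there is nothing to prove.

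The main obstacle is the faithfulness argument in the last step: one must verify that the purely combinatorial constraint~\eqref{eqn:eftb-check} over the $\{x_{i\ell}\}$ variables genuinely enforces the continuous allocation conditions, which rests on the consistency between the ILP variables and the recovered budget vector through~\eqref{eqn:array-of-demands}. A secondary point to handle with care is the strict-versus-nonstrict inequality: Definitions~\ref{def:ef-tb}/\ref{def:strong-eftb} demand $u_i(a_i)>u_i(S)$, whereas \eqref{eqn:eftb-check} only rules out $u_i(\vc a_{i\ell})<u_i(S)$ and thus permits equality. I would reconcile this by appealing to the tie-breaking convention built into the setup, where the distinct initial budgets (and the $\|\vc b\|_1$ tie-breaker ensuring a unique optimal $\vc b$) are intended to eliminate the boundary ties, so that the nonstrict ILP feasibility coincides with the strict envy definition.
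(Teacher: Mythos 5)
Your proof is correct. Note that the paper states this Fact without any proof, treating it as an immediate consequence of the algorithm's design; your argument supplies exactly the verification that is left implicit: the termination guard in Step~3 gives $\alpha=0$; the $\ell_\infty$ constraint in Step~\ref{step-np} together with the homogeneity of the demand correspondence (so that jointly rescaling $(\vc{p},\vc{b})$ changes neither demand, nor the set of zero-priced courses, nor the clipped excess demand) gives the claimed budget ratio; and feasibility of the returned $\vc{b}^*$ in \eqref{eqn:budget-perturb-ilp} with constraint \eqref{eqn:eftb-check}, combined with the consistency between the selected sub-interval in \eqref{eqn:array-of-demands} and the recovered budget, gives (contested) EF-TB. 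The one point you flag -- that Definitions~\ref{def:ef-tb} and~\ref{def:strong-eftb} demand a strict inequality while \eqref{eqn:eftb-check} only excludes strict violation, permitting ties -- is a genuine looseness in the paper's own formalization (the definition as literally written even fails for $S=\emptyset$), not a defect of your argument; your appeal to the tie-breaking conventions is the reasonable repair, though strictly speaking the cleanest fix is to read the definitions with $\geq$ in place of $>$.
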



\subsubsection{Speeding up the search of budget perturbations satisfying EF-TB constraints} \label{sub:fast-EF-TB}
The simple implementation of searching EF-TB constraints is that we enumerate all pairs of students $i,i'$ with their possible demands $\vc{a_{i\ell}},\vc{a_{i'\ell'}}$ and check~\eqref{eqn:eftb-check} for all the EF-TB constraints. However, on larger instances this process is very slow because we need to consider  $\binom{n}{2}$ pairs of students, each student may receive one of $k_i$ bundles, and finally for each pair of possible bundles  checking~\eqref{eqn:eftb-check} requires to solve student $i$'s optimal bundle out of a subset of courses $\vc{\tilde{a}_{i'\ell'}}$ ($\vc{a_{i'\ell'}}$ for classic EF-TB and $\vc{a_{i'\ell'}}\cup \{k:p_k=0\}$ for contested EF-TB). For an economy with more than $3000$ students, it requires to solve for more than $4.5$ million such optimal bundles. Repeating that on each iteration is quite slow! 
For this issue, we consider two optimizations:
\begin{description}
  \item[Two simple sufficient conditions for no-envy:] Fixing $i,i',\ell,\ell'$, we shall use two simple sufficient conditions for proving that allocating $\vc{a_{i\ell}}$ and $\vc{a_{i'\ell'}}$ to $i,i'$ does not violate the (contested) EF-TB constraints. Because these two conditions are easy to verify, we can reduce the number of times we check~\eqref{eqn:eftb-check} a lot and thus save a lot of time. The first condition comes from the reporting language: even if the entire super-bundle $\vc{\tilde{a}_{i'\ell'}}$ is invalid for student $i$, its utility from $i$'s perspective is no greater than that for $\vc{a_{i\ell}}$, i.e.,
  $$B\cdot req_i(\vc{\tilde{a}_{i'\ell'}}) + \vc{w_i}\cdot \vc{\tilde{a}_{i'\ell'}}\leq B\cdot req_i(\vc{a_{i\ell}}) + \vc{w_i}\cdot  \vc{a_{i\ell}}~.$$ 
  The second condition is from the fact that $\vc{a_{i\ell}}$ is the $i$'s optimal allocation under budget $\overline{b_{i\ell}}$. If the total price of bundle $\vc{\tilde{a}_{i'\ell'}}$ is upper bounded by $\overline{b_{i\ell}}$, $\forall S\subseteq \vc{\tilde{a}_{i'\ell'}}, u_i(S) \leq u_i(\vc{a_{i\ell}})$. 
  \item[Memorize envious pairs of students:] Students with very different preferences are likely to never envy each other throughout the run of the algorithm. We take advantage of this idea by only enumerating all pairs of students every 10 iterations, and in the other 9 iterations we  only consider pairs of students whose envy constraint was tight in a past iteration. In particular, when the optimal budget perturbation results in a zero-error solution (under a partial enumeration of possible envies), we force the algorithm to recompute the iteration by enumerating all pairs of students. Note that this implementation cannot guarantee that the allocation computed in each iteration satisfies (contested) EF-TB. However, it guarantee the final allocation satisfies (contested) EF-TB. 
\end{description}
On an instance with approximately 3000 students, these two optimizations speed up our time-per-iteration (amortized including iterations where we check all pairs) by a factor of about 300.   

\subsection{Shortcuts in price space}\label{sub:Shortcuts}

\paragraph{Warm starts.} In the preliminary experiments, we observe the following phenomena when using different step sizes and proper budget perturbation. 
\begin{itemize}
\item When the step size $\delta$ is small compared to $\epsilon$, the algorithm can converge to a zero-error solution. However, for courses that are consistently slightly  over-demanded, their prices increase slowly from $0$ to their final prices. With these courses, the algorithm needs almost $1/\delta$ or even more steps to converge. 
\item On the other hand, when the step size $\delta$ is large, the number of possible budget-demand pairs for each student may not be enough to help the budget perturbation significantly improve the clearing error. However, even if we set $\epsilon$ to $0$ (i.e. we only use discrete tatonnement), the prices found by the algorithm can be quite close to good regions, where prices found with smaller $\delta$ lie, and the algorithm only takes much less time to reach such regions because of the larger step size.
\end{itemize}
Motivated by these observations, we shall combine discrete tatonnement and our algorithm together to improve the speed. We shall first run discrete tatonnement with a larger step size $\delta_0$ and with $(1+\beta)/\delta_0$ steps, and then turn to our algorithm with a smaller step. In the first warm-start phase we also save time by not computing optimal budget perturbations. 

 
\paragraph{Merge Equivalent Steps.} 

As discussed above, when we use smaller step sizes, it may take the algorithm many iterations to update some prices. 
Fortunately, it turns out that for many of those iterations the set of possible demands remains constant across all students for many consecutive iterations. 
Whenever this is the case, we can save time by binary searching for the next iteration where the excess demand changes. 
A further more clever optimization considers stretches where the demand sets alternate between only a few possible vectors; again we can binary search for the number of iterations that we need to take to reach a new demand vector. 

\section{Computational performance of our algorithm}
\label{sec:perform}
In this section, we discuss the computational performance of our algorithm, in particular in comparison to the previous state of the art.
In Subsection~\ref{sub:compare-baseline}, we 
describe our experiments comparing the algorithms. 
Then, in Subsection~\ref{sub:compare-opt-algo} we focus on how the improvements described in Section~\ref{sub:Shortcuts} compare to our basic algorithm.

\subsection{Comparing with the benchmark}
\label{sub:compare-baseline}

In this subsection we compare our algorithm with the benchmark algorithm. As we will soon see (Figure~\ref{fig:progress-of-algos-error}), our algorithm is much faster. To understand why, we also consider two intermediate algorithms. Overall, the algorithms we compare are:

\begin{description}
    \item[Benchmark] The (previous) state-of-the-art commercial algorithm; see description in  Appendix~\ref{app:baseline}.
    \item[Vanilla] Vanilla tatonnement, aka without optimizations such as tabu search, individual price adjustments, or optimal budget perturbations, that are used in other variants that we consider (for pseudocode, see  Algorithm~\ref{alg:tatonnement} with $\epsilon = 0$). 
    \item[Basic] Our basic algorithm, aka the algorithm descirbed in Subsection~\ref{sub:basic}, which adds optimal budget perturbations to tatonnement, but without further optimizations described in Subsection~\ref{sub:Shortcuts}.
    \item[Fast] Our final algorithm, including optimizations from Subsection~\ref{sub:Shortcuts}.
\end{description}


\paragraph{Choice of parameters.}
All the algorithms perturb the students' budgets: Benchmark and Vanilla use only random perturbations, while Basic and Fast use a mix of random and optimal budget perturbations. 
Larger budget perturbations tend to make the algorithmic task of finding an (approximate) equilibrium easier: At one extreme, the existence proof holds even with infinitesimal budget perturbations; at the other extreme, with infinite budget perturbations the mechanism reduces to Random Serial Dictatorship which is computationally trivial.

For the sake of a fair comparison of the algorithms, we ensure the algorithms have the same magnitude of total budget perturbations: 
\begin{itemize}
    \item For Benchmark and Vanilla we draw the budgets uniformly i.i.d.\ from $[1,1+\beta]$.
    \item For Basic and Fast, we draw the {\em base budgets} uniformly i.i.d.\ from $[1+\beta/4,1+3\beta/4]$, and allow further optimal budget perturbation of magnitude $\epsilon = \beta/4$.
\end{itemize}
In particular, we consider $\beta=0.04$ here. 
We use a large step size of $0.005$ for the warm start of Fast. 
For the remaining parameters, we replicate that of~\cite{BudishCKO17} for Benchmark and use the same step size of $\delta=0.002$ for all the algorithms (except in the warm start).

\paragraph{Instances.} For the computational experiments, we use the largest instances available to us:

\begin{itemize}
\item {\bf Law} - A law school with about 500 students and 125 classes/sections 
\item {\bf Ivy Large} - A large Ivy-league business school with several thousand students and several hundred classes/sections 
\item {\bf Ivy Huge} - A large Ivy-league business school with several thousand students and around a thousand classes/sections and lots of challenging constraints 
\end{itemize}

A typical student's schedule in those instances has around 5 courses, although some schools bid in the fall for the entire year (so around 10 courses total).

\subsubsection*{Findings}

\begin{figure}[t]
\centering
\includegraphics[width=.48\textwidth]{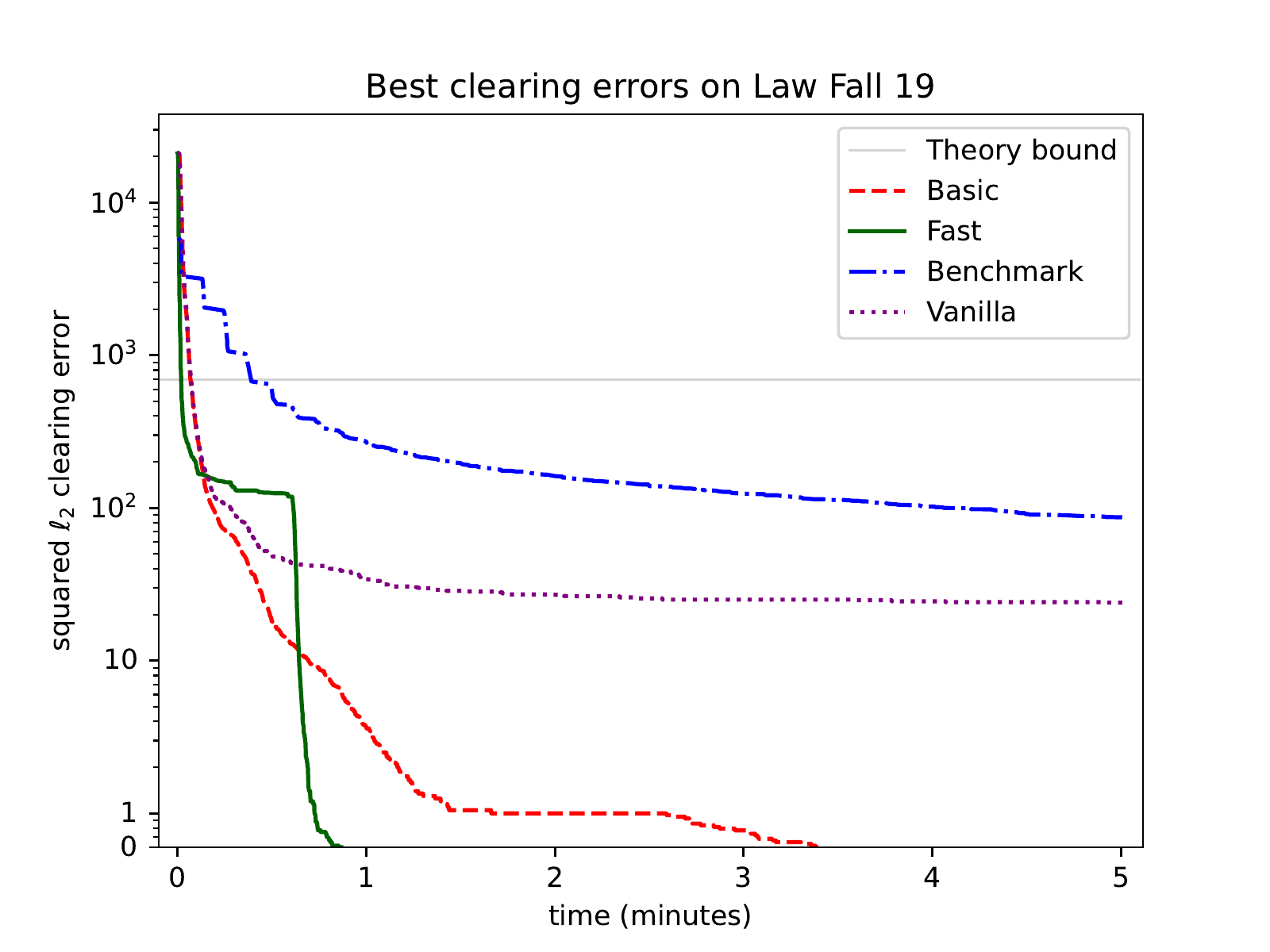}
\includegraphics[width=.48\textwidth]{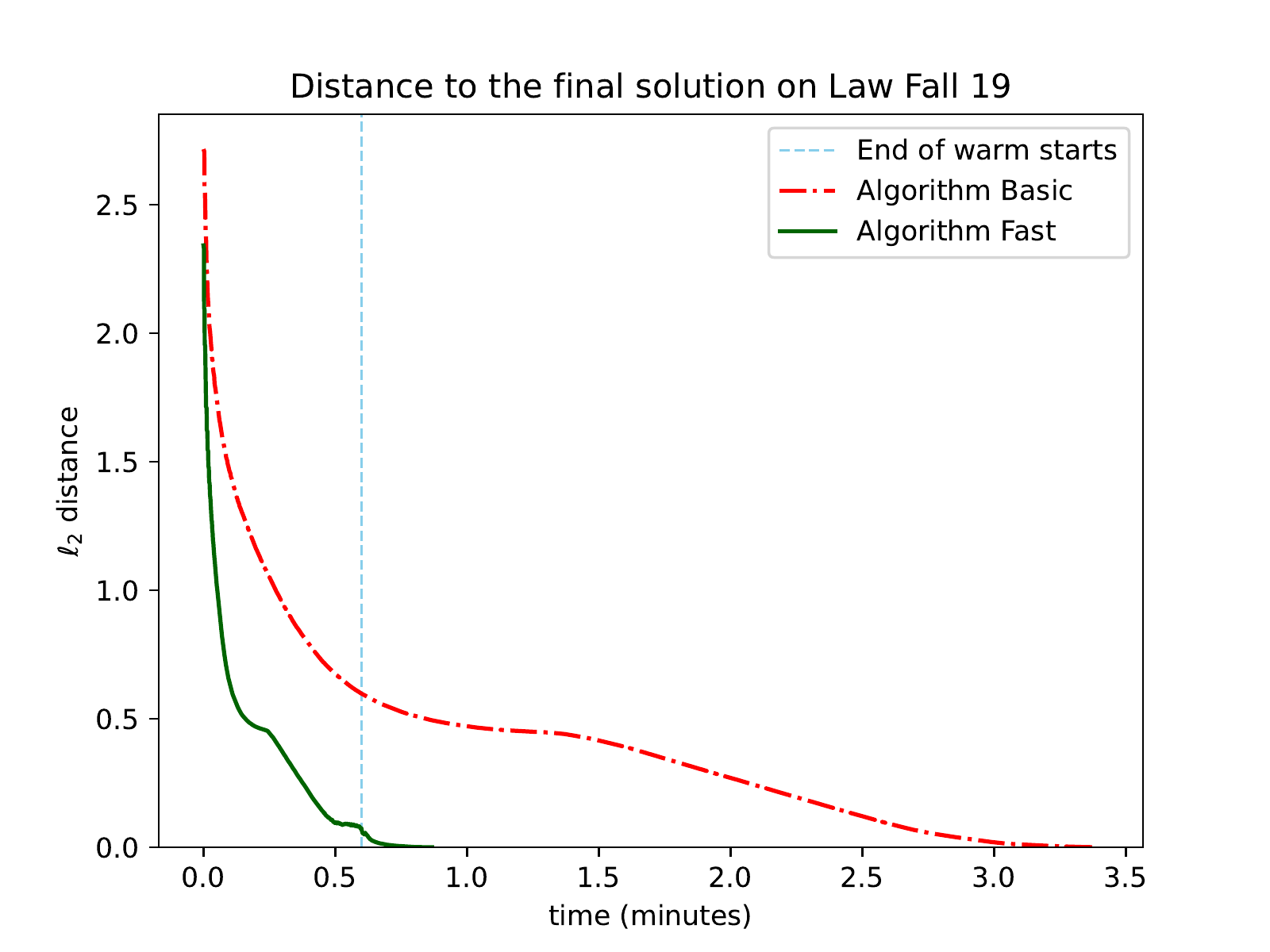}
\caption{For each of the algorithms tested and time $t$, we plot the best clearing error obtained by the algorithm up to time $t$ (left). For Basic and Fast, we plot the distance between the found solution and the final solution with respect to the running time (right).}
\label{fig:progress-of-algos-error}
\label{fig:progress-of-algos-distance}
\end{figure}

Figure~\ref{fig:progress-of-algos-error} presents the average%
\footnote{
This is an average over different draws of students' initial budgets; we use 20 runs for Basic and Fast, and 10 runs for the slower Benchmark and Vanilla. 
} 
best clearing error found by the four algorithms with respect to time.

\paragraph{Observation 1: optimal budget perturbations help, significantly.}
From the plot, we can see that the two algorithms with budget perturbation (i.e.~Basic and Fast) significantly outperform those without budget perturbation (i.e.~Benchmark and Vanilla) on clearing errors --- when our basic algorithm terminates with a $(0, \beta)$-CEEI, the average best clearing errors found by Benchmark and Vanilla are still larger than 30. Therefore, we believe budget perturbation is the most important ingredient introduced in our algorithms for clearing the market. 

\paragraph{Observation 2: individual price adjustments do not help.}
In Figure~\ref{fig:progress-of-algos-error}~and~\ref{fig:progress-of-algos-error-app} it can be observed that Vanilla obtains lower clearing error than Benchmark on all the instances we tested. More precisely if we run both algorithms long enough then Benchmark does eventually catch up (Figure~\ref{fig:progress-of-algos-error-4h}), but only at a time scale much larger than our algorithm needs to converge to zero clearing error.

The observation that individual price adjustments slow the algorithm is very surprising since it stands in contrast to the findings of~\cite{OthmanSB10}. It is even more puzzling because in each iteration Algorithm~\ref{alg:tabu-search} chooses the (myopically) better of updating all the prices or one, so it seems intuitive that individual price adjustments can only help. 
One simple reason is that the time-per-iteration of Benchmark is significantly slower compared to Vanilla\footnote{We note that in the experiments we actually measure a proprietary variant of the benchmark algorithm whose time-per-iteration has been heavily optimized. We also note that when we tried to combine individual  price adjustments with optimized budget perturbations, the computational overhead of individual price adjustments was even worse because we had to resolve for optimal budget perturbation for each individual price adjustment.}. (In~\cite{OthmanSB10} this issue did not come up because  both algorithms were tested in terms of number of iterations.) 

Another interesting piece of this puzzle is that~\cite{OthmanSB10} thought of tatonnement as computer scientists often do: a direction for a local search algorithm with the objective of minimizing the clearing error (indeed, in some utility models it exactly corresponds to a (sub)-gradient of the clearing error, e.g.~\cite{KC82,CheungCD20,LW20}). When viewed in this way, it may sometimes get stuck in local minima. However, inspired by economics, we view tatonnement as a distributed process where we modify the price of each course simultaneously without worrying about the global clearing error. When used in this way, it may sometimes locally {\em increase} the clearing error; but our experiments suggest that it can effectively escape those local minima, leading to better solutions faster. Indeed the clearing error does not monotonically decrease throughout the run of our algorithm.

\paragraph{Observation 3: Our optimized algorithm initially has larger clearing errors than our basic algorithm.} In the plot, it is easy to spot the sharp drop of clearing error for Fast --- this is exactly the time when Fast switches from the warm start to the second phase. It may appear that the time we spend on the warm start is too long, but this is in fact not the case, as we discuss in Subsection~\ref{sub:compare-opt-algo}.

\subsection{Our optimized algorithm }
\label{sub:compare-opt-algo}

As mentioned in Observation 3 above, Fast initially makes slow progress in terms of market clearing error during the warm start compared to Basic. We argue that it still makes important progress towards the eventual equilibrium even if we don't see that in the clearing error.
First, we note that although we did not carefully optimize the cutoff of the warm start (we heuristically set it to approximately $1/\delta$, where $\delta$ is the step size),  Fast seems to work really well in practice!

More interestingly, we can measure the progress towards the eventual equilibrium in terms of distance in price space (instead of current clearing error). In Figure~\ref{fig:progress-of-algos-distance}~and~\ref{fig:progress-of-algos-distance-app}, we show that the optimized algorithm approaches an ultimate equilibrium much faster when measured in price space distance. We also observe that at the time that the algorithm switches phases, we're already quite close in price space, so a more refined (small step size) second phase is appropriate. (Of course we unfortunately only know the distance to eventual equilibrium in hindsight, otherwise this could have made for a great heuristic approach to knowing when to switch phases!)

\section{Examples of successful manipulations}
\label{sec:anecdotes}
How can strategic students manipulate SP-L  mechanisms in realistic instances? 
To really understand the nature of manipulations found by the algorithm, in this section we report insights from our qualitative analysis that zooms in on specific manipulations, one for each variant of our algorithm. Our representation is over simplified with made up student names and courses --- but they're all based on manipulations discovered by our manipulation-finding algorithm on almost-real data (see Remark~\ref{remark:5-additive}). Of course, each case study may not be representative of all possible manipulations. However, we find this methodology quite helpful for gaining intuition. In particular, we were able to use the case study for classic EF-TB to extract the simple manipulation described in the introduction, and propose the contested EF-TB criterion in response. 

All the examples are described in what can be informally thought of as an ``almost-large-market'': from the perspective of an individual student, course prices are approximately set by other students, but a student's reported preferences can nudge them infinitesimally  towards a market clearing equilibrium. 

\begin{remark}[5-additive utilities]\label{remark:5-additive}
Unfortunately, the original preferences in the true instances are incredibly complicated, mostly due to various constraints imposed by the schools (e.g.~avoiding courses with conflicting meeting times, meeting minimum unit requirements, etc). 
To keep the case study analyses tractable, we repeat the manipulation finding experiments, but on modified preferences that ignore all those constraints and simply assume that the students utility is {\em 5-additive}, i.e.~each student wants the schedule of 5 courses that maximizes the total weight; here we use the original course weights reported by the students, but ignore all conflicts and requirements. Formally, for every student $i\in [n]$, the new utility function can be described by tuple $(\vc{w}',valid',req')$ where $\vc{w}'=(w_j)_{j\in [m]}$, and for every bundle $\vc{x}\in 2^{[m]}$,
    \begin{align*}
        valid'(\vc{x})&=\begin{cases}1&\|\vc{x}\|_1\le 5\\0&\|\vc{x}\|_1>5\end{cases}\\
        req'(\vc{x})&=0
    \end{align*}
    \end{remark}
    
    \subsection{A simple manipulation without EF-TB}
    
    \begin{example}[Manipulation without EF-TB constraints] \label{ex:no-EFTB} \hfill
    
    Alice and Bob both want only the last seat of CS161. Because of other students' demand,  ECON101 is full but has a low price.
    With true reporting CS161 could go to either Alice or Bob, depending on the random initial budgets. 
    
    Bob can manipulate his preferences to report that he wants ECON101 as his second course (aka Bob's manipulated preferences are:
    $$
    \{\text{CS161},\text{ECON101}\} \succcurlyeq \{\text{CS161}\} \succcurlyeq \{\text{ECON101}\}.
    $$
    Under the manipulated preferences, since ECON101 is cheap, the only way it will not be allocated to Bob is if Bob exhausts his budget on CS161. Even if Alice's initial budget is higher, the optimal unconstrained budget perturbation will make sure that Bob's final budget is higher than Alice's, in which case he gets CS161 and Alice gets nothing --- an equilibrium.
    \end{example}
    
    For this manipulation to work, Bob had to know that at equilibrium prices ECON101 is already exactly filled by other students  --- a knowledge he is unlikely to have in realistic bidding. Indeed, suppose that demand for ECON101 was higher this semester, the algorithm raised its price, and now it is missing exactly one student: in this case the optimal budget perturbation would have to ensure that Bob does get into ECON101, which can be achieved by perturbing budgets {\em against} Bob and letting Alice grab the last seat in CS161.
    
    However, this manipulation is fairly robust if the price of ECON101 is always very low (``ECON101 tends to have a low price'' is a general statistic that a student could plausibly learn from historical bids). In this case, even if ECON101 is missing a student, Bob's budget may be larger than Alice's by a sufficient margin to afford both CS161 and ECON101. So when ECON101 is undersubscribed, the budget perturbation could go either way\footnote{Alice would have a slight advantage due to our particular tie-breaking rule.}, but it {\em always} goes in favor of the strategic Bob when ECON101 is oversubscribed.
    
    \subsection{A simple manipulation with classic EF-TB}
    
\begin{example}[Manipulation with classic EF-TB constraints]\label{ex:regular-manipulation} \hfill

Alice and Bob both want the last seat in the popular CS161 course. But Alice is even more excited about taking independent research units with her advisor, of which there is unlimited supply, so the price is always zero (aka this is an {\em uncontested} course); she would like to take both. Bob's second choice is  ECON101; because of other students' demand,  ECON101 is full but has a low price.

With true preferences, since ECON101 is cheap, the only way it will not be allocated to Bob is if Bob exhausts his budget on CS161. Thus even if Alice ranks higher, the optimal budget perturbation sets her budget lower than Bob. In this case Alice always gets independent study (only), and Bob always gets CS161 (only) --- an equilibrium.

If Alice misreports her preferences to rank independent research units lower than CS161, she would envy Bob whenever he gets the last seat to CS161. Whenever her initial budget is higher, this prevents the optimal budget perturbation from driving it below Bob's, increasing her chances of getting the last seat in CS161. 
\end{example}

Note that ranking the uncontested course (independent research units) lower {\em never hurts} Alice. 
So this manipulation is profitable in expectation for Alice even if she only has very noisy information about her rank and other students' demand.

Interestingly, this manipulation works {\em because} of the EF-TB constraints that we introduce to prevent manipulations! 
    
    \subsection{A simple manipulation with contested EF-TB}
    
    We now discuss a simple manipulation that our algorithm discovered even with the contested EF-TB; while some profitable manipulations may exist in practice, as we show in Section~\ref{sec:manip} they're extremely rare. 
    
    \begin{example}[Manipulation with contested EF-TB constraints]\label{ex:contested} \hfill
    
    Many students, including Bob, like to take CS161 and ECON101 together, but they rank ECON101 over CS161. Alice already took ECON101 last semester, so she only wants the last seat in CS161. Because of other students' demand,  ECON101 is full but has a very low price.

    With true preferences, if Bob's budget is higher than Alice's by a margin greater than the price of ECON101, he could afford both CS161 and ECON101, leaving Alice with nothing. 
    
    Alice could manipulate her preferences to report that she wants ECON101 as her second course. 
    Bob always gets ECON101, because it's cheap and it's his top priority.  Whenever Alice doesn't get CS161 she has to get ECON101 (because its price is cheap); if both Bob and Alice get ECON101, the course becomes oversubscribed, which would cause a clearing error. Therefore the optimal budget perturbation would make sure Bob's budget is low enough compared to Alice's that he can't afford both courses: he will get ECON101, and Alice will get CS161 - an equilibrium.
    \end{example}
    
    As with Example~\ref{ex:no-EFTB}, this manipulation {\em does} pose some risk --- if ECON101 is undersubscribed, the optimal budget perturbation may reduce Alice's budget below the price of CS161 so that she has to take ECON101. However, if ECON101 is very cheap, there's always also the small perturbation that increases Alice's budget so that she can afford both ECON101 and CS161 (while Bob only affords ECON101). So, because of the asymmetry in the prices of CS161 and ECON101, if ECON101 is oversubscribed, this manipulation can increase Alice's chances of getting into CS161, but if ECON101 is undersubscribed, Alice's chances aren't hurt by much. 
    
        \subsection{Can A-CEEI with random budget perturbation be manipulated?}
        
        All the manipulations that we found seemed tied to our optimal budget perturbations procedure. So it is natural to ask whether the original A-CEEI%
        \footnote{For fair comparison, note that the previous state-of-the-art practical implementation augmented the original A-CEEI mechanism with two stages that had other incentive issues (see Remark~\ref{remark:zero-error}).} without optimal budget perturbations can also be manipulated
        
        We speculatively conjecture in practice that manipulations similar to those described in Example~\ref{ex:contested} can also be profitable without optimal budget perturbations: the simplest way to think about this example is that Alice adding ECON101 to her demand should (slightly) increase the price of ECON101; this makes it less likely for Bob to be able to afford both ECON101 and CS161; this in turn makes it more likely that Alice can get a seat in CS161.
        
        While it seems plausible that such manipulations are profitable in practice, note that for our algorithm with contested EF-TB, our numerical analysis in Section~\ref{sec:manip} suggests that they are extremely rare. 
        Either way, at this point for A-CEEI without optimal budget perturbations we can only speculate: we could not run our manipulation-finding algorithm with the original A-CEEI algorithm that only uses random budget perturbation because this algorithm is too slow. (The manipulation-finding algorithm needs to make many calls to the A-CEEI algorithm to evaluate different possible deviations for every student.)

\section{Quantitative analysis of manipulability}
\label{sec:manip}

In this section we ask whether in practice strategic students bidding in A-CEEI have an incentive to deviate from truth-telling. 
To address this question, we model the student's process for choosing her bids with an optimization algorithm that can try different bid manipulations and test whether they improve the student's utility.

\subsection{The manipulation-finding algorithm}

It is still intractable to consider all possible bid-manipulations. This is true both for a student optimizing their bid, and for our algorithms in our experiments. Instead, we restrict attention to a simple hill-climbing algorithm that iteratively looks for a course whose bid-manipulation would increase the utility, in expectation over uncertainty; see Algorithm~\ref{alg:find-manipulation} for details. To validate our approach, we benchmark our manipulation-finding algorithm on a course allocation mechanism that is known to be manipulable~\cite{BC12}, which is used at Harvard Business School (we henceforth refer to this mechanism as HBS).  

To simplify notations, we explicitly define the notation of randomized mechanism for course allocation problem.
\begin{definition}[Randomized mechanism]
\label{def:randomized-mechanism}
A randomized mechanism $\calM$ for course allocation problem can be characterized by a function $f_\calM: (\vc{u},\vc{c},\vc{r}) \mapsto (\vc{a}_i)_{i=1}^n$, which takes the input of the course allocation market $(\vc{u},\vc{c})$ and randomness $\vc{r}$, and outputs an allocation $(\vc{a}_i)_{i=1}^n=(\vc{M}_i(\vc{u},\vc{c},\vc{r}))_{i=1}^n$ where $\vc{M}_i(\vc{u},\vc{c},\vc{r}) \subseteq [m]$ denotes the bundle that the mechanism allocated to student $i$.

For example, HBS is a randomized mechanism, where the randomness $\vc{r}$ is used to determine the order of students in the random serial dictatorship process.
Our A-CEEI algorithm (Algorithm~\ref{alg:tatonnement}) with fixed parameters can also be seen as a randomized mechanism, since an allocation can be uniquely determined based on its output (prices and budgets), and the base budgets are determined by randomness $\vc{r}$.
\end{definition}

\begin{remark}
    Note that in Definition~\ref{def:randomized-mechanism}, we do not require the allocation generated by the mechanism to be feasible (i.e., some courses might be oversubscribed). That is because no A-CEEI algorithm can guarantee to always output a price with zero clearing error. Nevertheless, our algorithm was observed to always obtain a feasible solution in all instances we encountered, as mentioned before.
\end{remark}

In our experiments on manipulability of a specific mechanism, we iteratively run (a variant of) Algorithm~\ref{alg:find-manipulation} multiple times with respect to different parameters $H=\{\eta_1,\eta_2,...\}$ for every student, under the different uncertainty models described later in Definition~\ref{def:manipulation}. 

\begin{algorithm}[t]
\caption{Find a profitable manipulation for a student}
\label{alg:find-manipulation}
\begin{algorithmic}
\State \textbf{Inputs:} a randomized mechanism $\calM$, student $i$ and its utility function $u_i$, (previous best manipulation $v_0$), \\ the criteria for profitable manipulation (resampled randomness $(\vc{u}_{-i},\vc{c},\calR)$ or population $(\calU_{-i},\vc{c},\calR)$)
\State \textbf{Outputs:} a profitable manipulation $u'_i$
\State \textbf{Parameters:} a local update coefficient $\eta$
\State \textbf{Algorithm:}
\begin{enumerate}
  \item Let $v_0 \leftarrow u$ (or the best manipulation found in previous iterations with different $\eta$).
  \item Denote the description for $v_0$ by $(\vc{w},valid,req)$.
  \item Try to increase or decrease the weight $w_j$ for each course $j$ in $v_0$ to obtain new misreports $V = \{v_{j,\pm 1}\}_{j\in [m]}$. Each $v_{j,k}$ can be described by $(\vc{w}',valid,req)$ where
  $w'_{j'}=\begin{cases}w_j&j' \ne j\\ \eta^k w_j&j' = j\end{cases}$ for $j'\in[m]$, $k \in \{\pm 1\}$.
  \item Let $v^*=\begin{cases}
  \argmax_{v \in V \cup \{v_0\}} \E_{\vc{r}\sim \calR}[u_i(\vc{M}_i([v_j,\vc{u}_{-i}],\vc{c},\vc{r}))] & \text{resampled randomness}, \\ 
  \argmax_{v \in V \cup \{v_0\}} \E_{\vc{u}_{-i} \sim \calU_{-i},\vc{r}\sim \calR}[u_i(\vc{M}_i([v_j,\vc{u}_{-i}],\vc{c},\vc{r}))] & \text{resampled population}.\end{cases}$
  \item If $v^* = v_0$, terminate with $v_0$ as the best manipulation found when $v_0 \ne u$, otherwise return failed.
  \item Otherwise, update $v_0 \leftarrow v^*$ and go back to step 2.
\end{enumerate}
\end{algorithmic}
\end{algorithm}

\paragraph{Practical optimizations}
Since computing the exact expected value in the step 4 is generally intractable, we have to use samples to estimate it. 
For computational efficiency, we only use 5 samples for each estimation in following experiments, which is a fairly small number. As a result, some profitable manipulations might be missed, and some non-profitable manipulations can be reported incorrectly.
Nevertheless, we will use a larger number of samples to test the statistical significance of those manipulations found by Algorithm~\ref{alg:find-manipulation}.

We also use another optimization to reduce the number of times we have to solve for an A-CEEI, we search for manipulations in parallel: in each run of Algorithm~\ref{alg:find-manipulation} we actually consider a subset of several students and who are trying out their deviation on the same instance (students subsets are shuffled in each iteration to ensure that the signal corresponds to the deviation by the student and not to particular deviations by their peers).


\paragraph{Handling false positive manipulations}
For any manipulation that is profitable on average in the exploration phase (which, as mentioned above is very noisy), we run it for $33$ iterations, and then $67$ more if it is still profitable. At this point we eliminated manipulations whose profitability did not meet $.05$ p-value statistical significance. 

The original experiments had many false positives, and we expect as many as $5\%$ of them would to survive the p-value test. So for all manipulations that survived the first p-value test, we run $100$ additional iterations, and report manipulations that still have p-value $.05$. 


\subsection{Modeling students' uncertainty}
In practice, while students often know the course capacities, they tend to have limited, aggregate information about historical bids; furthermore it is impossible for the students to know the algorithm's randomness at the time of bidding. We therefore study the students' decision process under two different models of uncertainty: {\em Resampled randomness} is the most conservative model of students' uncertainty --- even if they have perfect knowledge of their peers' preferences, they never know the randomness of the algorithm (in particular, the random budget perturbation) at the time of bidding. {\em Resampled population} adds the students' uncertainty about their peers' preferences. See details of both models in Definition~\ref{def:manipulation}.

\begin{definition}[Profitable manipulation]\label{def:manipulation}
For a randomized mechanism $\mathcal{M}$, student $i$'s original utility function $u_i$ and some manipulation $u_i'$, we say the manipulation from $u_i$ to $u_i'$ is {\em profitable}
\begin{itemize}
    \item under {\em resampled randomness} $(\vc{u}_{-i},\vc{c},\calR)$: profitable in expectation under the market $([u_i,\vc{u}_{-i}],\vc{c})$ and randomness $\vc{r}$ where $r \sim \calR$, i.e.,
     $$\E_{\vc{r}\sim \calR}[u_i(\vc{M}_i([u_i',\vc{u}_{-i}],\vc{c},\vc{r}))] > \E_{\vc{r}\sim\calR}[u_i(\vc{M}_i([u_i,\vc{u}_{-i}],\vc{c},\vc{r}))].$$
    \item under {\em resampled population}
    $(\calU_{-i},\vc{c},\calR)$: profitable in expectation under the market $([u_i,\vc{u}_{-i}],\vc{c})$ and randomness $\vc{r}$ where $\vc{u}_{-i} \sim \calU_{-i}$ and $\vc{r} \sim \calR$, i.e.,
    $$\E_{\vc{u}_{-i} \sim \calU_{-i},\vc{r}\sim \calR}[u_i(\vc{M}_i([u_i',\vc{u}_{-i}],\vc{c},\vc{r}))] > \E_{\vc{u}_{-i} \sim \calU_{-i},\vc{r}\sim\calR}[u_i(\vc{M}_i([u_i,\vc{u}_{-i}],\vc{c},\vc{r}))].$$
    Given a known market $(\vc{u},\vc{c})$, for every student $i$ we  let $\calU_{-i}$ be the distribution that re-samples $n-1$ other students, independently and with replacement, from of the true population $\vc{u}_{-i}$ of other students.
\end{itemize}
\end{definition}

\subsection{Experiment set-up}

\paragraph{Choice of parameters} For our A-CEEI algorithm (Algorithm~\ref{alg:tatonnement}), we always seek for a $(0,\beta)$-CEEI, where $\beta=0.04$. 
Same as before, we draw base budgets $\vc{b}_0$ uniformly i.i.d.~from $[1+\beta/4,1+3\beta/4]$ (i.e., $\vc{b}_0=\vc{r} \sim \calR=\mathcal{U}^n[1+\beta/4,1+3\beta/4]$) and allow further optimal budget perturbation of magnitude $\epsilon=\beta/4$.
We choose different step sizes for different instances to speed up computation. Specifically, we use $\delta=0.0005$ for Ivy Small, $\delta=0.001$ for Biz, and $\delta=0.002$ for Small.

For the manipulation-finding algorithm (Algorithm~\ref{alg:find-manipulation}), we let the set $H$ of local update coefficients be $\{16000, 800, 40, 2\}$.

\paragraph{The instances}

Despite various optimizations, our manipulation-finding ultimately  requires solving a very large number of A-CEEI instances, which can require significant computational resources even with our efficient algorithm. Therefore we run this experiment on relatively small instances:
\begin{itemize}
    \item {\bf Small} - A small business school with about 150 students and 50 classes/sections (Fall 2021).
    \item {\bf Ivy Small} - An Ivy-league business school with about 500 students and 50 classes/sections (Fall 2020).
    \item {\bf Biz} - A business school with about 500 students and 50 classes/sections (Fall 2020).
\end{itemize}






\subsection{Statistical findings}
Our numerical results are summarized in Table~\ref{tab:stat-manipulation}. Our algorithm successfully finds profitable manipulations for the benchmark manipulable HBS mechanism on all instances. On instances Small and Biz  it finds almost no statistically significant profitable manipulations for any variant of our A-CEEI algorithm. For Ivy Small, it finds that about 7\% of the students can gain as much as around 10\% from misreporting their preferences with no EF-TB constraints, and somewhat less with classic EF-TB constraints; however with contested EF-TB profitable manipulations were extremely rare.






\begin{table}[h]
\centering
\begin{tabular}{@{}cccccccccc@{}}
\toprule
\multirow{3}{*}{Instance} & \multirow{3}{*}{Uncertainty} & \multicolumn{8}{c}{Mechanism} \\ 
 &  & \multicolumn{2}{c}{No EF-TB} & \multicolumn{2}{c}{Classic EF-TB} & \multicolumn{2}{c}{Contested EF-TB} & \multicolumn{2}{c}{HBS} \\  \cmidrule(lr){3-4} \cmidrule(lr){5-6} \cmidrule(lr){7-8} \cmidrule(lr){9-10} 
 &  & \# & Gain & \# & Gain & \# & Gain & \# & Gain \\ \midrule
\multirow{2}{*}{Small} & Randomness & 1 & 0.04\% & 0 & - & 1 & 0.04\% & 66 & $8.0 \pm 2.1\%$ \\
 & Population & 0 & - & 0 & - & 0 & - & 63 & $9.0 \pm 2.1\%$ \\ 
\multirow{2}{*}{Ivy Small} & Randomness & 21 & $13.5 \pm 3.6\%$ & 15 & $8.5 \pm 2.2 \%$ & 1 & 0.02\% & 107 & $23.5 \pm 3.6 \%$ \\
 & Population & 20 & $7.4\pm 1.4\%$ & 11 & $4.7\pm 1.0\%$ & 0 & - & 117 & $20.7 \pm 3.0 \%$ \\
\multirow{2}{*}{Biz} & Randomness & 0 & - & 0 & - & 0 & - & 87 & $26.3 \pm 4.0\%$ \\
 & Population & 1 & 0.7\% & 0 & - & 0 & - &  64 & $21.6 \pm 2.8\%$ \\
\bottomrule
\end{tabular}
\caption{Number of statistically significant manipulations found by our algorithm and their average relative gain ($\pm$ standard error) in utility.}
\label{tab:stat-manipulation}
\end{table}


\begin{remark}
We are not sure  why profitable manipulations for no EF-TB and classical EF-TB manipulations were found only for Ivy Small and not for the other schools. We observe  at equilibrium Ivy Small has two courses that are in very high demand --- their price exceeds the entire budgets of some students.
\end{remark}

\bibliographystyle{ACM-Reference-Format}
\bibliography{bib}

\appendix

\section{Additional related work}
\label{app:related}

\cite{WV15} consider an adaptation of A-CEEI to team formation games (where agents need to be able to afford all their teammates). They empirically evaluate the gain from a manipulation using a different algorithm than the one we describe in Section~\ref{sec:manip}. They report that on an instance based on 17 students ranking each other, the adapted A-CEEI was significantly {\em more} manipulable than the adapted HBS.  In contrast, in our experiments for course allocation HBS is always more manipulable.

\cite{DABMS14} survey different mechanisms for the course allocation problem. A newer work by~\cite{MBU19} focuses course allocation via a variant of Probabilistic Serial mechanism;  incentives properties of Probabilistic Serial were also recently studied by~\cite{WWZ20}. 

\cite{BHM15} study the existence and complexity of exact CEEI with indivisible goods  which are either perfect substitutes or complements (unfortunately courses in our instances satisfy neither).
\cite{GHMS21-chores,BCM22,CGMM22-chores, CGMM22b-chores} study the problem of designing (approximate or exact) algorithms for computing a CEEI over divisible chores. Interesting variants of CEEI have also been studied by~\cite{Aziz15, BLM16, PK19, BNT21}.

The fair allocation of indivisible goods has received a lot of attention recently, see for example the recent surveys of~\cite{DBLP:reference/choice/BouveretCM16} and~\cite{AAB+22-survey}. In particular, there is a large body of theoretical work on existence, algorithms, and complexity of allocations satisfying EF1 (as does A-CEEI) or its strengthening EFX (e.g.~\cite{LMMS04,DBLP:conf/ijcai/BouveretCEIP17,BKV18,CGH19,PR20,DBLP:journals/tcs/AmanatidisBFHV21,CGMMM21,doi:10.1287/moor.2021.1196,SUKSOMPONG2023110956}. Most of those works focus on cases where the agents' valuation functions satisfy nice properties ranging from additivity to subadditivity; some of these assumptions are applicable in other settings (see e.g.~Spliddit~\cite{GP14}), but for course allocation real-world students utilities are quite complex --- they are not sub-additive, and in fact not even monotone.

There has also been a recent body of work on the computation of economic equilibria, which could be considered the most practical application of Alvin Roth's idea of ``The Economist as Engineer''~\cite{https://doi.org/10.1111/1468-0262.00335}.~\cite{RobustRentDivision}, similar to our work, provides theoretical grounding and experimental results based on real data for the fair allocation of indivisible goods in a specific context (rent sharing). Works such as~\cite{KC82,CheungCD20,LW20} (and many references therein) provide some theoretical grounding for what we found practically in developing our new search algorithm: that gradient descent-inspired tatonnement may be surprisingly effective for finding approximate equilibria despite worst-case complexity results.



\section{The benchmark algorithm}\label{app:baseline}


Our benchmark for comparison is the (previous) commercial state of the art algorithm. We provide an overview of the algorithm here, and more details can be found in academic publications~\cite{OthmanSB10,BudishCKO17}%
\footnote{Our benchmark corresponds to the ``first phase'' of the algorithm described in~\cite{BudishCKO17}. The remaining two phases are only used to cope with the fact that in practice clearing error is one-sided: undersubscribed courses are undesirable, but oversubscribed courses are absolutely infeasible, due e.g.~to safety regulations for room capacities. This is a conservative comparison: we show that our algorithm is already faster than the first phase, and finds assignments with {\em zero} clearing error.}.


\begin{algorithm}[t]
    \caption{Tabu search}
    \label{alg:tabu-search}
    \begin{algorithmic}
    \State \textbf{Inputs:} courses' capacities $\vc{c}$, students' preferences $\vc{u}$, initial budgets $\vc{b}_0\in [1,1+\beta]^n$.
    \State \textbf{Parameters:} neighborhood function $\neighbors(\vc{p})$, binary relation $\sim_p$ for prices.
    \State \textbf{Algorithm:}
    \begin{enumerate}
      \item Let $\vc{p} \gets \text{uniform}(1,1+\beta)^m, \mathcal{H} \gets \emptyset$.
      \item If $\|\vc{z}(\vc{u}, \vc{c}, \vc{p}, \vc{b}_0)\|_2=0$, terminate with $\vc{p}^*=\vc{p}$.
      \item Otherwise, 
      \begin{itemize}
        \item include all equivalent prices of $\vc{p}$ into the history: $\mathcal{H}\gets \mathcal{H} + \{\vc{p'}: \vc{p'} \sim_p \vc{p}\}$,
        \item update $\vc{p}\gets \arg \min_{\vc{p'}\in \neighbors(\vc{p}) - \mathcal{H}} \|\vc{z}(\vc{u}, \vc{c}, \vc{p'}, \vc{b}_0)\|_2$, and then
        \item go back to step 2.
      \end{itemize}
    \end{enumerate}
    \end{algorithmic}
\end{algorithm}

The basic idea for both the benchmark and our algorithm is tatonnement: increase the price of over-demanded courses, and decrease the prices of under-demanded courses (see also Algorithm~\ref{alg:tatonnement} with $\epsilon = 0$). 
The benchmark (see~\ref{alg:tabu-search} for pseudocode) augments the vanilla tatonnement with {\em tabu search}. That is, in each iteration it explores multiple neighbors in the price space and then updates to the one with minimum clearing error. To avoid repetitive searches, it will not consider price vectors in the neighborhood that are equivalent to some explored price. More specifically, two prices are defined to be equivalent if the student demands are identical under the prices, i.e. for any price vectors $\vc{p}, \vc{p'}$, $\vc{p}\sim_p\vc{p'}$ if 
\begin{align*}
\forall \text{ student $i\in[n]$}, ~~\vc{a}_i(\vc{u}, \vc{p}, \vc{b_0}) = \vc{a}_i(\vc{u}, \vc{p'}, \vc{b_0})~.
\end{align*}
The neighborhood function $\neighbors(\vc{p})$ consists of the following two types of neighbors.
\begin{enumerate}
    \item \emph{Gradient neighbors.} In gradient neighbors, prices are adjusted in proportion to the excess demand, i.e., for a time-varying set $\Delta \subseteq [0,1]$, the gradient neighbors consist of the following prices: 
    \begin{align*}
        \mathcal{N}^{\text{gradient}}(\vc{p},\Delta) = \{\vc{p}+\delta \cdot \vc{z}(\vc{u},\vc{c}, \vc{p}, \vc{b}): \delta \in \Delta \}~.
    \end{align*}
    \item \emph{Individual price adjustment neighbors.} In individual price adjustment neighbors, the price vector is adjusted in only a small number of entries. For over-demanded courses, we consider increasing their prices so that their total demands decrease by one. For under-demanded courses, we consider setting them to zero or decreasing them by a fixed amount. To avoid each iteration exploring too many individual price adjustment neighbors, the algorithm randomly merges these individual price adjustments into 35 price updates.
\end{enumerate}

\section{Economic efficiency of our algorithm}
\label{app:econeff}


The focus of our work is improving the computational efficiency and exploring the incentive properties of the A-CEEI problem. However, we were also curious to see how our new algorithm compares to the old benchmark in terms of economic efficiency of the allocations. Specifically, we use weighted variants of social welfare and Nash social welfare. The weights correspond to  students' (unperturbed) budgets: while in the original A-CEEI all agents have equal incomes, in practice schools often give students different budgets (mostly based on seniority) that correspond to priorities.  

For this section we compare to the first-two-stages-only variant of the old benchmark: The first-stage-only benchmark may have oversubscribed courses, while  the full 3-stage benchmark is not known to satisfy fairness or incentives (e.g.~SP-L) properties, so those would not be fair benchmarks for comparisons. 

The results are summarized in Table~\ref{tbl:nsw-of-algos}. At a high level, our algorithm is somewhat better on both metrics on most (although not all) instances. 

\paragraph{Utilitarian Social Welfare.} We compare the utilitarian social welfare (USW) of our solutions and the solutions found by the benchmark. For any algorithm $A$, the average USW is defined to be 
\begin{align*}
\sum_{i\in [n]} \frac{b_i}{\sum_{i\in [n]}b_i} \cdot u_i(\vc{A}_i(\vc{u}, \vc{c}))~,
\end{align*}
where $b_i$ is the base budget of student $i$. 

\paragraph{Nash Social Welfare.} We also compare the nash social welfare (NSW) of our solutions and the solutions found by the whole baseline algorithm. For any algorithm $A$, the NSW is defined to be 
\begin{align*}
\prod_{i\in [n]} u_i(\vc{A}_i(\vc{u}, \vc{c}))^{b_i/\sum_{i\in [n]} b_i}~,
\end{align*}
where $b_i$ is the base budget of student $i$. 

\begin{table}[t]
\centering
\begin{tabular}{ccccccc}
\toprule
\multirow{2}{*}{Instance} & \multicolumn{3}{c}{Avg. NSW}                                             & \multicolumn{3}{c}{Avg. USW}                                              \\ \cmidrule(lr){2-4}\cmidrule(lr){5-7}
                          & \multicolumn{1}{c}{Fast} & \multicolumn{1}{c}{Benchmark} & Imp. & \multicolumn{1}{c}{Fast} & \multicolumn{1}{c}{Benchmark} & Imp. \\ \midrule
      Ivy Huge Fall 21
 & \multicolumn{ 1 }{c}{470.44} & \multicolumn{ 1 }{c}{-} & $+\infty$
 & \multicolumn{ 1 }{c}{552.92} & \multicolumn{ 1 }{c}{541.5} & $2.11\%$
\\
Law Fall 19
 & \multicolumn{ 1 }{c}{846.66} & \multicolumn{ 1 }{c}{837.29} & $1.12\%$
 & \multicolumn{ 1 }{c}{1045.25} & \multicolumn{ 1 }{c}{1036.76} & $0.82\%$
\\
Law Fall 20
 & \multicolumn{ 1 }{c}{646.16} & \multicolumn{ 1 }{c}{572.66} & $12.83\%$
 & \multicolumn{ 1 }{c}{848.04} & \multicolumn{ 1 }{c}{764.39} & $10.94\%$
\\
Law Fall 21
 & \multicolumn{ 1 }{c}{168.93} & \multicolumn{ 1 }{c}{150.85} & $11.99\%$
 & \multicolumn{ 1 }{c}{194.15} & \multicolumn{ 1 }{c}{176.24} & $10.16\%$
\\
Ivy Small Fall 19
 & \multicolumn{ 1 }{c}{444.32} & \multicolumn{ 1 }{c}{445.86} & $-0.35\%$
 & \multicolumn{ 1 }{c}{567.31} & \multicolumn{ 1 }{c}{564.43} & $0.51\%$
\\
Ivy Small Fall 20
 & \multicolumn{ 1 }{c}{421.54} & \multicolumn{ 1 }{c}{415.86} & $1.37\%$
 & \multicolumn{ 1 }{c}{501.29} & \multicolumn{ 1 }{c}{496.28} & $1.01\%$
\\
Ivy Small Fall 21
 & \multicolumn{ 1 }{c}{437.49} & \multicolumn{ 1 }{c}{444.79} & $-1.64\%$
 & \multicolumn{ 1 }{c}{535.47} & \multicolumn{ 1 }{c}{540.59} & $-0.95\%$
\\
Ivy Small Fall 22
 & \multicolumn{ 1 }{c}{462.49} & \multicolumn{ 1 }{c}{462.77} & $-0.06\%$
 & \multicolumn{ 1 }{c}{521.4} & \multicolumn{ 1 }{c}{522.94} & $-0.29\%$
\\
Biz Fall 18
 & \multicolumn{ 1 }{c}{203.69} & \multicolumn{ 1 }{c}{202.75} & $0.46\%$
 & \multicolumn{ 1 }{c}{232.53} & \multicolumn{ 1 }{c}{231.77} & $0.33\%$
\\
Biz Fall 19
 & \multicolumn{ 1 }{c}{191.53} & \multicolumn{ 1 }{c}{189.0} & $1.34\%$
 & \multicolumn{ 1 }{c}{222.21} & \multicolumn{ 1 }{c}{219.88} & $1.06\%$
\\
Biz Fall 20
 & \multicolumn{ 1 }{c}{209.41} & \multicolumn{ 1 }{c}{191.97} & $9.08\%$
 & \multicolumn{ 1 }{c}{227.47} & \multicolumn{ 1 }{c}{212.09} & $7.25\%$
\\
Biz Fall 21
 & \multicolumn{ 1 }{c}{234.95} & \multicolumn{ 1 }{c}{234.62} & $0.14\%$
 & \multicolumn{ 1 }{c}{240.77} & \multicolumn{ 1 }{c}{240.51} & $0.11\%$
\\ \bottomrule
\end{tabular}
\caption{Comparing the (nash) social welfare.}
\label{tbl:nsw-of-algos}
\end{table}

\section{Hardness of A-CEEI with real-world utility functions}

\label{app:hard}
In this appendix, we show finding an A-CEEI continues to be computationally intractable even in the special case of utilities as described in Eq.~\eqref{eq:original-utilities}. 

\begin{theorem}
There exists a constant $\beta>0$ such that finding $(\alpha,\beta)$-CEEI is \PPAD-complete for $\alpha=\Omega(n/m)$, even when the utilities are restricted to the form given in Eq.~\eqref{eq:original-utilities}. 
\end{theorem}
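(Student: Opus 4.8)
The plan is to reduce from the known $\PPAD$-hardness of finding an $(\alpha, \beta)$-CEEI with \emph{general} ordinal preferences, established in~\cite{OthmanPR16,Rub18}. The main task is to show that the restricted utility class of Eq.~\eqref{eq:original-utilities}—additively-separable weights plus a hard validity constraint and a soft requirement bonus—is expressive enough to simulate (an appropriately discretized version of) the general preferences used in the prior hardness proof, while preserving both the clearing-error parameter $\alpha = \Omega(n/m)$ and the budget-perturbation parameter $\beta$. Concretely, I would first recall the gadget structure of the source reduction and identify exactly which features of agent preferences it relies on (typically: a small number of ``desirable'' bundles per agent with a prescribed preference ordering among them, enforced by combinatorial demand).

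The key steps, in order, are as follows. First, I would fix the source instance and its target CEEI parameters, and normalize so that each agent in the source reduction has at most polynomially many relevant bundles. Second, for each such agent I would construct a tuple $(\vc{w}, valid, req)$ realizing the desired preference ordering: the additive weights $\vc{w}$ encode the relative desirability of courses, the large bonus $B > \|\vc{w}\|_1$ forces the soft requirement to dominate so that meeting $req$ always beats any bundle that does not, and the $-\infty$ from $valid$ rules out the bundles that should be infeasible. The role of $B$ is crucial: because $B$ exceeds the total additive weight, lexicographic-style preferences (first satisfy the requirement, then maximize weight) can be encoded faithfully, which lets me mimic the threshold/complementarity behavior that the general hardness reduction exploits. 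Third, I would verify that $valid$ and $req$ as constructed admit an efficient representation (e.g.\ as a mixed-integer program or small circuit), matching the structural assumption the utility model makes. Finally, I would argue the correctness of the reduction: an $(\alpha, \beta)$-CEEI of the constructed course-allocation instance yields, after decoding prices and allocations, an $(\alpha', \beta)$-CEEI of the source instance, where $\alpha' = \Omega(n/m)$ is preserved by the gadget's size accounting, so a polynomial-time algorithm for the restricted problem would solve the general $\PPAD$-hard problem. Membership in $\PPAD$ is inherited directly, since the restricted problem is a special case of the general one.

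The main obstacle I anticipate is the \emph{quantitative} preservation of the approximation parameter $\alpha = \Omega(n/m)$ through the gadget construction. The general hardness result guarantees hardness for some clearing error scaling with the ratio of agents to goods, and any encoding that inflates the number of courses $m$ (for instance by adding auxiliary ``requirement'' or ``conflict'' courses to implement $valid$ and $req$) risks diluting this bound, since $\alpha = \Omega(n/m)$ weakens as $m$ grows. I would therefore need to keep the number of auxiliary courses per agent-gadget bounded by a constant (or otherwise control the blow-up), so that the $n/m$ ratio in the constructed instance stays within a constant factor of the source ratio. A secondary delicate point is handling the non-monotonicity introduced by the hard $valid$ constraint together with the interaction between the budget inequality $\beta$ and the discreteness of prices: I must ensure that the perturbed-budget slack in the target instance does not let agents escape the intended gadget behavior, which amounts to checking that the price granularity and the gap between $B$ and $\|\vc{w}\|_1$ are compatible with the allowed $\beta$.
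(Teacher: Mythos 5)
Your high-level plan---reduce from the known hardness of A-CEEI and re-encode the agents' preferences in the restricted class---places the entire burden on the claim that every preference ordering used in the gadgets of~\cite{OthmanPR16} can be realized by a tuple $(\vc{w}, valid, req)$ as in Eq.~\eqref{eq:original-utilities}. That claim is precisely the crux of the theorem, and your proposal asserts it rather than establishes it. A utility of this form is additive except for \emph{one} threshold bonus $B$ and hard validity constraints; this yields only a specific two-level lexicographic structure (first satisfy $req$, then maximize $\vc{w}\cdot\vc{x}$), and it is not at all clear that the gadgets for all eight gate types in the source reduction (LESS, AND, OR, HALF, etc.) induce bundle orderings expressible in this template. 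The paper does not attempt this: instead it changes the \emph{source} problem, first proving a new \PPAD-hardness result for a ``simplified'' generalized circuit using only SUM and NOT gates (Theorem~\ref{thm:simp-gcircuit-ppad-hard}, by simulating the other six gates with SUM/NOT combinations at the circuit level, where expressibility is not an issue), and only then builds gadgets in the restricted utility class for those two gate types plus a copy gadget. Without that intermediate step, your step two is a gap, not a construction.

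A second, more technical omission: the obstacle to handling \emph{cyclic} circuits is not (only) the dilution of the $n/m$ ratio you flag, but that in each gate gadget the bound $n_k$ on how many additional students may demand the output course is a constant fraction (here $1/27$) of the corresponding bound for the input courses. Chaining gadgets around a cycle therefore produces contradictory size requirements. The paper resolves this with a dedicated course-size amplification gadget (Lemma~\ref{lem:course-size-amplification-gadget}) that copies a price while \emph{doubling} the admissible number of demanding students; constructing that gadget within the restricted utility class is itself part of the work. Your size-accounting paragraph is aimed at the right neighborhood but does not identify this issue, and without the amplification gadget the reduction cannot handle the cyclic circuits that the \PPAD-hardness of \textsc{Gcircuit} requires.
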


Our proof reduces the generalized circuit problem to A-CEEI with utilities as in Eq.~\eqref{eq:original-utilities}.
The reduction consists of two steps.
The first step is to reduce the generalized circuit problem~\cite{Rub18} to one of its simplifications.
In the simplified problem, we need to find an approximate solution of a (possibly cyclic) circuit only involving the SUM and NOT gates (Defintion~\ref{def:simp-gcircuit}), which are two out of the eight gates for the generalized circuit problem (Definition~\ref{def:gcircuit}).
To show the \PPAD-hardness of the simplified problem (Theorem~\ref{thm:simp-gcircuit-ppad-hard}), we construct each of the remaining six gates by combinations of SUM and NOT gates.
The second step is to reduce the simplified generalized circuit to A-CEEI.
Our reduction 
follows the framework of~\cite{OthmanPR16}.
We construct gadgets for the SUM and NOT gates (Lemma~\ref{lem:not-sum-gadget}).
The gadgets contain \emph{input courses}, an \emph{output course}, \emph{interior courses} and some students. 
An \emph{output course} of a gadget can be \emph{input courses} of other gadgets.
In each gadget, the prices of the \emph{input courses} and the \emph{output course} respectively correspond to the values of the input variables and the output variable in the circuit. 
Our construction guarantees that the courses prices approximately satisfy the constraints of the variables of their corresponding variables. 
However, these gadgets cannot be directly integrated into a cyclic circuit. 
For each gadget, we require an upper bound for the number of additional students that want the \emph{output course}.
In the construction, the upper bound for the output course is strictly smaller than that of the input courses in the gadget. 
If the circuit is cyclic, there can be contradictions on magnitude of the upper bounds. 
To resolve this issue,~\cite{OthmanPR16} introduces \emph{course-size amplification gadget}, a COPY gadget in which the price of the output course approximately equals the price of the input course and the upper bound of the output course is twice as large as that of the input course. 
And finally, we finish the proof by constructing the \emph{course-size amplification gadget} under utilies as in Eq.~\eqref{eq:original-utilities}  (Lemma~\ref{lem:course-size-amplification-gadget}). 


\subsection{Simplified Generalized Circuit}

In this subsection, we present our simplified generalized circuit and the \PPAD-hardness solving it. We start from the definition of the generalized circuit problem as in~\cite{OthmanPR16}. 

\begin{definition}[Generalized circuits]
    A generalized circuit $\mathcal{S}=(V,\mathcal{T})$ is a collection of nodes $V$ and gates $\mathcal{T}$. Each gate $T \in \mathcal{T}$ is a 4-tuple $(G,v_1,v_2,v)$, consisting of a type of gate $ G \in \{ G_{/2}, G_{\frac{1}{2}}, G_{+}, G_{-}, G_{<}, G_{\land}, G_{\lor}, G_{\lnot} \} $, two input nodes $v_1,v_2\in V\cup\{nil\}$ and an output node $v$.
    Each node $v\in V$ may be the output node for at most one gate, i.e., for every two gates $T=(G,v_1,v_2,v)$ and $T'=(G',v_1',v_2',v')$ in $\mathcal{T}$, $v\ne v'$. 
\end{definition}

\begin{definition}[$\epsilon$-\textsc{Gcircuit} problem]
\label{def:gcircuit}
    Given a generalized circuit $\mathcal{S}=(V,\mathcal{T})$ and an assignment $\vc{x}: V \to \mathbb{R}$, we say $\vc{x}$ \emph{$\epsilon$-approximately satisfies} $\mathcal{S}$ if $\vc{x}(v)\in[0,1+\epsilon]$ for any $v \in V$, and for each gate $T=(G,v_1,v_2,v) \in \mathcal{T}$, we have that $|\vc{x}(v)-f_G(\vc{x}(v_1),\vc{x}(v_2))|\le \epsilon$, where $f$ is defined as follows:
    \begin{enumerate}
        \item HALF: $f_{G_{/2}}(x)=x/2$
        \item VALUE: $f_{G_{\frac{1}{2}}}\equiv\frac{1}{2}$
        \item SUM: $f_{G_{+}}(x,y)=\min(x+y,1)$
        \item DIFF: $f_{G_{-}}(x,y)=\max(x-y,0)$
        \item LESS: $f_{G_{<}}(x,y)=\begin{cases}
        1 & x>y+\epsilon\\
        0 & y>x+\epsilon
        \end{cases}$
        \item AND: $f_{G_{\land}}(x,y)=\begin{cases}
        1 & (x>0.9+\epsilon)\land(y>0.9+\epsilon)\\
        0 & (x<0.1-\epsilon)\lor(y<0.1-\epsilon)
        \end{cases}$
        \item OR: $f_{G_{\lor}}(x,y)=\begin{cases}
        1 & (x>0.9+\epsilon)\lor(y>0.9+\epsilon)\\
        0 & (x<0.1-\epsilon)\land(y<0.1-\epsilon)
        \end{cases}$
        \item NOT: $f_{G_{\lnot}}(x)=1-x$.
    \end{enumerate}
\end{definition}

\begin{remark}
Definition~\ref{def:gcircuit} is slightly different from those in~\cite{OthmanPR16} and~\cite{Rub18} on AND and OR gates. It can be observed that this problem (with fan-out 2) is harder than that in~\cite{Rub18} for sufficiently small constant $\epsilon>0$ and is thus \PPAD-complete for some constant $\epsilon>0$.
\end{remark}

Our simplification is that we restrict the set of types of gates to only consist of SUM and NOT. Formally, the definition of the circuit and the problem is as follows.

\begin{definition}[Simplified Generalized Circuit]
    A simplified generalized circuit $\mathcal{S}=(V,\mathcal{T})$ is a collection of nodes $V$ and gates $\mathcal{T}$. Each gate $T \in \mathcal{T}$ is a 4-tuple $(G,v_1,v_2,v)$, consisting of a type of gate $ G \in \{ G_{+}, G_{\lnot} \} $, two input nodes $v_1,v_2\in V\cup\{nil\}$ and an output node $v$.
    Each node $v\in V$ may be the output node for at most one gate, i.e., for every two gates $T=(G,v_1,v_2,v)$ and $T'=(G',v_1',v_2',v')$ in $\mathcal{T}$, $v\ne v'$. 
\end{definition}

\begin{definition}[Simplified $\epsilon$-\textsc{Gcircuit} problem]
    \label{def:simp-gcircuit}
    Given a simplified generalized circuit $\mathcal{S}=(V,\mathcal{T})$ and an assignment $\vc{x}: V \to \mathbb{R}$, we say $\vc{x}$ \emph{$\epsilon$-approximately satisfies} $\mathcal{S}$ if $\vc{x}(v)\in[0,1+\epsilon]$ for any $v \in V$, and for each gate $T=(G,v_1,v_2,v) \in \mathcal{T}$, we have that $|\vc{x}(v)-f_G(\vc{x}(v_1),\vc{x}(v_2))|\le \epsilon$, where $f$ is defined as follows:
    \begin{enumerate}
        \item SUM: $f_{G_{+}}(x,y)=\min(x+y,1)$
        \item NOT: $f_{G_{\lnot}}(x)=1-x$.
    \end{enumerate}
\end{definition}

In the remaining part of this subsection, we will prove the following hardness result:
\begin{theorem}
    \label{thm:simp-gcircuit-ppad-hard}
There exists a constant $\epsilon>0$ such that simplified $\epsilon$-\textsc{Gcircuit} problem with fan-out 2 is \PPAD-complete.
\end{theorem}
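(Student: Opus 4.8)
The plan is to establish \PPAD-membership essentially for free and then prove \PPAD-hardness by reducing from the full $\epsilon_0$-\textsc{Gcircuit} problem (Definition~\ref{def:gcircuit}) with fan-out $2$, which is \PPAD-complete for some constant $\epsilon_0>0$ by the Remark following that definition. Membership is immediate: every simplified circuit is in particular a generalized circuit that happens to use only the gate types $G_{+}$ and $G_{\lnot}$, so the simplified problem is a special case of the full problem and inherits containment in \PPAD. For the reduction I would take a full circuit $\mathcal{S}$ and replace each of its eight gate types by a small gadget built only from $\mathsf{SUM}$ and $\mathsf{NOT}$ gates, crucially exploiting that generalized circuits permit directed cycles. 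Each gadget will be required to be accurate to the target tolerance $\epsilon_0$ whenever its internal $\mathsf{SUM}/\mathsf{NOT}$ gates are satisfied to a smaller tolerance $\epsilon_1$; since every gadget has size $O(\log(1/\epsilon_0))=O(1)$, choosing $\epsilon_1$ to be a sufficiently small constant will keep the accumulated error under control.

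First I would build the arithmetic gates. The constant $1$ is obtained from any node $v$ via $\mathsf{SUM}(v,\mathsf{NOT}(v))=\min(v+(1-v),1)=1$; subtraction comes from the identity $\max(x-y,0)=\mathsf{NOT}(\mathsf{SUM}(y,\mathsf{NOT}(x)))$, which realizes $G_{-}$. The $G_{/2}$ (HALF) gate I would implement with a feedback loop: introduce a node $z$ constrained by $z=\mathsf{DIFF}(x,z)$, i.e.\ $z\approx\max(x-z,0)$; any assignment satisfying this forces $z\approx x/2$, and a short case analysis on the two branches of the $\max$ shows the error is only a constant multiple of $\epsilon_1$. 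The constant $G_{1/2}$ (VALUE) gate is then $\mathsf{HALF}$ of the constant $1$ (equivalently, a single self-loop $\mathsf{NOT}$ gate $v=1-v$). Finally $\min$ and $\max$ are available as $\min(x,y)=\mathsf{DIFF}(x,\mathsf{DIFF}(x,y))$ and $\max(x,y)=\mathsf{SUM}(y,\mathsf{DIFF}(x,y))$, which will feed the Boolean gadgets.

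The main obstacle is the three Boolean gates $G_{<},G_{\land},G_{\lor}$, because $\mathsf{SUM}$ and $\mathsf{NOT}$ are Lipschitz and so cannot by themselves produce the discontinuous $\{0,1\}$-valued behaviour these gates demand. The key idea is amplification by repeated doubling: since $\mathsf{SUM}(d,d)=\min(2d,1)$, a chain of $T=\lceil\log_2(4/\epsilon_0)\rceil=O(1)$ such gates computes $\min(2^{T}d,1)$. For $G_{<}$ I would set $d=\mathsf{DIFF}(x,y)=\max(x-y,0)$ and output the amplified value: if $x>y+\epsilon_0$ then $d>\epsilon_0$ and $2^{T}d\ge1$, so the output saturates near $1$, whereas if $y>x+\epsilon_0$ then $d\approx0$ and the output stays near $0$; the intermediate band is exactly where $G_{<}$ is left unconstrained. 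The gates $G_{\land}$ and $G_{\lor}$ then follow by amplifying $\max(\min(x,y)-\tfrac12,0)$ and $\max(\max(x,y)-\tfrac12,0)$ respectively, whose generous dead-zones $(0.1,0.9)$ make them even more forgiving.

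The delicate point in this last step is the error budget: each doubling also doubles the error, so after $T$ stages an initial per-gate error $\epsilon_1$ can grow to roughly $2^{T}\epsilon_1=\Theta(\epsilon_1/\epsilon_0)$. Requiring this to remain below $\epsilon_0$ forces $\epsilon_1=\Theta(\epsilon_0^{2})$, which is still a fixed constant, so the tolerance for which the simplified problem is hard is a constant as claimed. I would finish with the routine bookkeeping: fan-out is kept at $2$ by inserting copy gates $\mathsf{NOT}(\mathsf{NOT}(x))\approx x$ where a value is needed more than twice, all node values stay within $[0,1+\epsilon_1]$ as required (the only gate that can push a value below $0$ is $\mathsf{NOT}$, and only by $O(\epsilon_1)$), and the substitution is plainly polynomial-time computable. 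Composing the gadgets then shows that any $\epsilon_1$-approximate solution of the simplified circuit yields an $\epsilon_0$-approximate solution of $\mathcal{S}$, which completes the reduction and hence the proof of Theorem~\ref{thm:simp-gcircuit-ppad-hard}.
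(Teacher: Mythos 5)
Your proposal is correct and follows essentially the same route as the paper: simulate the six remaining gate types with SUM/NOT gadgets, using the same key tricks (DIFF via $\lnot((\lnot x)+y)$, HALF via the fixed-point equation $v=x-v$, VALUE via a self-loop, and repeated doubling to realize the discontinuous Boolean gates). The only substantive difference is quantitative bookkeeping: the paper shifts by $1/4$ and finishes with a single-doubling ROUND gadget, so the amplification chain only needs to open a constant-size gap and the loss stays linear ($\epsilon_0=\epsilon/11.5$), whereas your requirement that the accumulated error after $T$ doublings stay below $\epsilon_0$ forces $\epsilon_1=\Theta(\epsilon_0^2)$ --- still a constant, so the theorem follows either way.
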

More specifically, we will prove the hardness result by showing that a $\epsilon$-\textsc{Gcircuit} problem can be reduced to a simplified $\epsilon_0=\epsilon/11.5$-\textsc{Gcircuit} problem. Next, we present how to approximately construct the other 6 gates in generalized circuits by NOT and SUM gates that $\epsilon_0$-approximately satisfy SUM and NOT. For convenience, we define a NOT-SUM gadget that applies SUM on the two input variables and then applies NOT on the resulting variable. It is clear that we can approximate such gadgets with errors less than $2\epsilon_0$.  

\paragraph{DIFF gates.} We implement 
$$v_1 - v_2 = \lnot \Big((\lnot v_1) +v_2 \Big).$$
In more detail, notice that with the NOT gate, we can construct a variable $\overline{v_1}$ with value $\vc{x}(\overline{v_1})\in [1-\vc{x}(v_1)-\epsilon_0, 1-\vc{x}(v_1)+\epsilon_0]$. By applying NOT-SUM on $\overline{v_1}$ and $v_2$, we can construct a variable $v$ such that 
\begin{align*}
    \vc{x}(v) &\in [1-\vc{x}(\overline{v_1})-\vc{x}(v_2)-2\epsilon_0, 1-\vc{x}(\overline{v_1})-\vc{x}(v_2)+2\epsilon_0]\\
    &\subseteq [\vc{x}(v_1)-\vc{x}(v_2)-3\epsilon_0, \vc{x}(v_1)-\vc{x}(v_2)+3\epsilon_0]~,
\end{align*}
which approximates the output with an error less than $3\epsilon_0$. 

\paragraph{HALF and VALUE gates.} 
Taking advantage of the fact that we have a {\em generalized} circuit, we observe that $v = v_1/2$ is the unique solution 
$$v = v_1 - v .$$
In more detail, notice that with the previously constructed DIFF gates, we can construct a variable $v$ such that 
\begin{align*}
    \vc{x}(v) \in \vc{x}(v_1)-\vc{x}(v)\pm 3\epsilon_0\Rightarrow \vc{x}(v) \in \frac{1}{2}\vc{x}(v_1) \pm 1.5\epsilon_0~,
\end{align*}
which approximates the output of the HALF gate with an error less than $1.5\epsilon_0$. 


Similarly, we observe that $v=\frac{1}{2}$ is the unique solution of 
$$v=1-v~.$$
In more detail, notice that with NOT gates, we can construct a variable $v$ such that $\vc{x}(v)\in 1-\vc{x}(v)\pm \epsilon_0$ and approximates the output of the VALUE gate with error less than $0.5\epsilon_0$. 

\paragraph{ROUND gadgets.} We shall introduce the ROUND gadget, which will be used for the construction of the remaining AND, OR, LESS gates. The ROUND gadget has one input variable and one output variable, and approximates the following function:
\begin{align*}
    f_{G_{[]}}(x) = \begin{cases}
        0 & x\leq 0.25~,\\
        1 & x\geq 0.75~.
    \end{cases}
\end{align*}

With a VALUE and a HALF gate, there can be a variable $v_{1/4}$ with value $0.25\pm 2\epsilon_0$. By applying a DIFF gate on $v_1,v_{1/4}$, we can get a variable $v_1^-$ such that 
\begin{align*}
    \vc{x}(v_1^-) \in \begin{cases}
        0 \pm 3.5\epsilon_0 & \vc{x}(v_1) \leq 0.25 \\
        0.5\pm 3.5\epsilon_0 & \vc{x}(v_1) \geq 0.75
    \end{cases}
\end{align*}
Therefore, using a SUM gate with both inputs $v_1^-$, we can approximate the ROUND gadget with error less than $8\epsilon_0$. 

\paragraph{AND and OR gates.} We shall first show the construction of NOR gadgets, which approximates the following function 
\begin{align*}
    f_{G_{\downarrow}}(x,y) = \begin{cases}
        0 & (x>0.9)\lor(y>0.9)\\
        1 & (x<0.1)\land(y<0.1)
    \end{cases}
\end{align*}
We can construct them with errors less than $8\epsilon_0$ by applying a NOT-SUM gadget on $v_1,v_2$ and then applying a ROUND gadget on the resulting variable. To construct OR gates, we can first apply a NOR gadget and then apply a NOT gate. Then, the resulting variable $v$ satisfies
\begin{align*}
    \vc{x}(v) \in \begin{cases}
        1 \pm 9\epsilon_0 & (\vc{x}(v_1)>0.9)\lor(\vc{x}(v_2)>0.9)\\
        0 \pm 9\epsilon_0 & (\vc{x}(v_1)<0.1)\land(\vc{x}(v_2)<0.1)
    \end{cases}
\end{align*}
which approximates OR gates with errors less than $9\epsilon_0$. To construct AND gates, we can first apply two NOT gates respectively on $v_1,v_2$ and then apply the NOR gadget. Then, the resulting variable $v$ satisfies
\begin{align*}
    \vc{x}(v) \in \begin{cases}
        1\pm 8\epsilon_0 & (\vc{x}(v_1)>0.9+\epsilon_0)\land(\vc{x}(v_2)>0.9+\epsilon_0)\\
        0 \pm 8\epsilon_0 & (\vc{x}(v_1)<0.1-\epsilon_0)\lor(\vc{x}(v_2)<0.1-\epsilon_0)
    \end{cases}
\end{align*}
which approximates AND gates with errors less than $8\epsilon_0$.

\paragraph{LESS gates.} With a DIFF gate on $v_1$ and $v_2$, we can get a variable $\tilde{v}$ such that 
\begin{align*}
    \vc{x}(\tilde{v}) \in \begin{cases}
        (10\epsilon_0, +\infty) & \vc{x}(v_1)>\vc{x}(v_2)+11.5\epsilon_0\\
        (-\infty, 1.5\epsilon_0) & \vc{x}(v_1)<\vc{x}(v_2)-11.5\epsilon_0
    \end{cases}
\end{align*}

Next, we consider a DOUBLE gadget, which is implemented by a SUM gate with both input variables. Because the error of SUM is $\epsilon_0$, the error of DOUBLE is also $\epsilon_0$. Then, we can derive the following lemma.
\begin{lemma}
After applying DOUBLE gadgets on a variable $v$ for $k$ times, the resulting variable $v'$ satisfies:
\begin{align*}
    \vc{x}(v') \in [2^k \vc{x}(v) - (2^k-1)\epsilon_0, 2^k \vc{x}(v) + (2^k-1)\epsilon_0]~.
\end{align*}
\end{lemma}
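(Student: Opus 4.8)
The plan is to prove the lemma by a straightforward induction on the number $k$ of DOUBLE gadgets applied. Write $v^{(0)}=v$ and let $v^{(t)}$ denote the variable obtained after $t$ applications, so that by definition $v^{(t)}$ is the output of a SUM gate both of whose inputs are $v^{(t-1)}$. The base case $k=0$ is immediate: then $v^{(0)}=v$ and the claimed interval $[2^0\vc{x}(v)-(2^0-1)\epsilon_0,\,2^0\vc{x}(v)+(2^0-1)\epsilon_0]=[\vc{x}(v),\vc{x}(v)]$ holds trivially since $2^0-1=0$.

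For the inductive step I would assume $\vc{x}(v^{(k)})\in[2^k\vc{x}(v)-(2^k-1)\epsilon_0,\,2^k\vc{x}(v)+(2^k-1)\epsilon_0]$. Because $v^{(k+1)}$ is the output of a SUM gate on inputs $v^{(k)},v^{(k)}$ and the assignment $\epsilon_0$-approximately satisfies every gate (Definition~\ref{def:simp-gcircuit}), we get $\vc{x}(v^{(k+1)})\in[\,f_{G_+}(\vc{x}(v^{(k)}),\vc{x}(v^{(k)}))-\epsilon_0,\,f_{G_+}(\vc{x}(v^{(k)}),\vc{x}(v^{(k)}))+\epsilon_0\,]$, which in the unsaturated regime equals $[2\vc{x}(v^{(k)})-\epsilon_0,\,2\vc{x}(v^{(k)})+\epsilon_0]$. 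Doubling the inductive hypothesis multiplies both the center and the half-width by $2$, and then absorbing the fresh $\pm\epsilon_0$ error yields a half-width of $2(2^k-1)\epsilon_0+\epsilon_0$ centered at $2^{k+1}\vc{x}(v)$. The only arithmetic worth checking is the identity $2(2^k-1)+1=2^{k+1}-1$, which closes the induction.

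The one point that genuinely requires care—and the main obstacle—is the truncation in $f_{G_+}(x,y)=\min(x+y,1)$: the displayed interval is only accurate while doubling has not saturated at $1$, i.e.\ while $2^k\vc{x}(v)+(2^k-1)\epsilon_0\le 1$. I would therefore state and apply the lemma in this unsaturated regime, which is exactly what the subsequent LESS-gate construction needs. There the input $\tilde{v}$ satisfies either $\vc{x}(\tilde{v})<1.5\epsilon_0$ or $\vc{x}(\tilde{v})>10\epsilon_0$, and one chooses $k$ in the window where $2.5\cdot 2^k\epsilon_0\le 0.25$ yet $9\cdot 2^k\epsilon_0\ge 0.75$; for the small branch the value never saturates so the clean interval applies directly, while for the large branch any saturation only pins the value near $1$ (within the domain $[0,1+\epsilon]$), which is precisely the rounded output desired. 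Hence the truncation can only help the downstream ROUND step, and restricting the lemma to the unsaturated regime loses nothing.
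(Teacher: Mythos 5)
Your proof is correct and follows essentially the same induction as the paper: the base case is trivial, and the inductive step doubles the interval, absorbs the fresh $\pm\epsilon_0$ from the SUM gate, and uses $2(2^k-1)+1=2^{k+1}-1$. Your additional discussion of the saturation of $f_{G_+}(x,y)=\min(x+y,1)$ is a genuine refinement that the paper's proof silently omits, and your argument that truncation only helps the downstream ROUND step is exactly the right way to justify applying the lemma in the LESS-gate construction.
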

\begin{proof}
    We shall prove it by induction. The base case is trivial when $k=0$. Suppose the statement holds for $k$. After applying the $(k+1)$th DOUBLE gadget, the resulting value lies in the interval:
    \begin{align*}
         [2(2^k \vc{x}(v) - (2^k-1)\epsilon_0)-\epsilon_0&, 2(2^k \vc{x}(v) + (2^k-1)\epsilon_0) + \epsilon_0] \\
         &= 
         [2^{k+1} \vc{x}(v) - (2^{k+1}-1)\epsilon_0, 2(2^{k+1} \vc{x}(v) + (2^{k+1}-1)\epsilon_0)]~.
    \end{align*}
\end{proof}

Therefore, by applying DOUBLE gadget $-\log(10\epsilon_0)$ times on $\tilde{v}$, we can get a variable $\tilde{v}'$ such that:
\begin{align*}
    \vc{x}(\tilde{v}'') \in \begin{cases}
        (0.9+\epsilon_0, +\infty) & \vc{x}(v_1)>\vc{x}(v_2)+11.5\epsilon_0\\
        (-\infty, 0.25-\epsilon_0) & \vc{x}(v_1)<\vc{x}(v_2)-11.5\epsilon_0
    \end{cases}
\end{align*}

Further, with a ROUND gadget on it, we can construct LESS gates with errors less than $8\epsilon_0$. 

\subsection{Reduce Simpified \textsc{Gcircuit} to A-CEEI}

In the reduction, we shall use a two-credit auxiliary course $c_0$ with infinite capacity. According to the definition of A-CEEI, in any $(\alpha,\beta)$-CEEI, its price $p_0$ must be zero. For each course $i$, let $n_i$ be the upper bound for the number of students wanting course $i$ in gadgets where $i$ is not the \emph{output course}.

We first present the construction of the following NOT-SUM gadget, which can be easily used to construct NOT and SUM gates.

\begin{lemma}[NOT-SUM gadget]
    \label{lem:not-sum-gadget}
    Let $n_j \geq n_i > 3\alpha$ and suppose that the economy contains the following courses:
    \begin{itemize}
        \item input course $c_i, c_j$;
        \item output course $c_k$ with capacity $q_k=2n_i/3$ (at most $n_k=n_i/3$ other students want $c_k$);
        \item interior course $c_0$;
    \end{itemize}
    and $n_i$ following students:
    \begin{itemize}
        \item 1 conflict: course $c_k$ and $c_0$ can not be simultaneously allocated to a student, i.e. $valid(\vc{x})=1$ if and only if $x_0+x_k\leq 1$;
        \item 1 requirement: the total allocated credits among courses $c_i,c_j,c_0$ is at least 2, i.e. $req(\vc{x})=1$ if and only if $x_i+x_j+2x_0\geq 2$;
        \item the utility function is:
        \begin{align*}
            u(\vc{x}) = \begin{cases}
                8x_i + 4x_j + 2x_k + x_0 + 16 & valid(\vc{x})=1, req(\vc{x}) = 1~,\\
                \hfill 8x_i + 4x_j + 2x_k + x_0 \hfill & valid(\vc{x})=1, req(\vc{x}) = 0~,\\
                \hfill -\infty\hfill & valid(\vc{x}) = 0~.
            \end{cases}
        \end{align*}
    \end{itemize}
    Then in any $(\alpha,\beta)$-CEEI,
    $$p_k^* \in [1-p_i^*-p_j^*,1-p_i^*-p_j^*+\beta]~.$$
\end{lemma}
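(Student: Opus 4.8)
The plan is to reduce the whole question to a count of how many students demand the output course $c_k$: I first pin down each student's demand as a function of the prices, and then read off the two inequalities that cut out the target interval.

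\textbf{Step 1: demand characterization (the heart of the gadget).} I claim that a student demands $c_k$ if and only if her budget $b$ satisfies $b \ge p_i^* + p_j^* + p_k^*$, in which case she buys exactly $\{c_i,c_j,c_k\}$. Three features of the utility drive this. First, the bonus satisfies $B = 16 > 15 = \|\vc{w}\|_1$, so every requirement-satisfying bundle dominates every requirement-violating one; since $\{c_0\}$ is free and already meets the requirement ($2x_0 \ge 2$), every student meets the requirement at her optimum. Second, the conflict $x_0 + x_k \le 1$ forces a choice between the free course $c_0$ and $c_k$; as $c_k$ contributes nothing toward the requirement, the only way taking $c_k$ can still satisfy the requirement is to also buy both $c_i$ and $c_j$, i.e. to take the full bundle $\{c_i,c_j,c_k\}$. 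Third, the powers-of-two weights make $\{c_i,c_j,c_k\}$ (utility $30$) the unique global maximizer, just above its free-$c_0$ substitute $\{c_i,c_j,c_0\}$ (utility $29$); the only cost difference between these two candidate bundles is the price $p_k^*$ of upgrading $c_0$ to $c_k$, so the student buys $c_k$ precisely when she can afford $\{c_i,c_j,c_k\}$. Enumerating the handful of valid bundles confirms no other bundle is ever optimal, so the threshold rule is exact.

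\textbf{Step 2: counting and the clearing bound.} Write $D$ for the number of in-gadget students with $b \ge p_i^* + p_j^* + p_k^*$ and $D' \in [0, n_i/3]$ for the number of students elsewhere in the economy demanding $c_k$ (at most $n_k = n_i/3$ by hypothesis). The excess demand is $z_k = D + D' - q_k = D + D' - 2n_i/3$. If $p_k^* > 0$, the $\alpha$-clearing bound gives $|z_k| \le \alpha$, so $D = 2n_i/3 - D' + z_k \in [n_i/3 - \alpha, 2n_i/3 + \alpha]$. Because $n_i > 3\alpha$, this forces $0 < D < n_i$: at least one student buys $c_k$ and at least one does not. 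The buyer yields $p_i^* + p_j^* + p_k^* \le b \le 1 + \beta$, hence $p_k^* \le 1 - p_i^* - p_j^* + \beta$; the non-buyer yields $p_i^* + p_j^* + p_k^* > b \ge 1$, hence $p_k^* > 1 - p_i^* - p_j^*$. Together these place $p_k^*$ in the claimed interval. Note that the hypothesis $n_i > 3\alpha$ is used exactly to make both containments $0 < D$ and $D < n_i$ follow from the single error bound.

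\textbf{Step 3: the degenerate case and the main obstacle.} The delicate case is $p_k^* = 0$, which I expect to be the real difficulty. Now the clipped excess demand penalizes only oversubscription, so clearing gives merely $z_k \le \alpha$, whence $D \le 2n_i/3 + \alpha < n_i$ (again using $n_i > 3\alpha$). This still delivers the lower bound: some student does not buy, so $p_i^* + p_j^* > 1$, and therefore $p_k^* = 0 \ge 1 - p_i^* - p_j^*$. The problem is the upper bound: since undersubscription is now free, nothing forces a buyer to exist, and a priori $p_i^* + p_j^*$ could exceed $1 + \beta$, which would place $p_k^* = 0$ outside the stated interval. This is precisely the saturated regime of the gate, where the ideal NOT-SUM output $\max(1 - p_i^* - p_j^*, 0)$ equals $0$, so $p_k^* = 0$ is the intended value; I would close the gap by invoking the global invariant that every equilibrium price is at most $1+\beta$ (a course priced above all budgets is completely undemanded, contradicting its $\Theta(n_i)$ capacity against the $\alpha$-bound), and by using that the gadget is only composed with inputs whose relevant sum stays controlled, so that $p_i^* + p_j^* \le 1+\beta$ whenever $p_k^* = 0$ is reached. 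Making this last point fully rigorous, rather than relying on the clean two-sided argument of Step~2, is where the genuine care is needed.
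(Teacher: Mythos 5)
Your argument is essentially the paper's own proof in a different packaging: the paper derives the full preference order over valid bundles (your Step 1 threshold rule is the same fact, since $\{c_i,c_j,c_k\}$ is the unique bundle containing $c_k$ that beats the always-affordable $\{c_0\}$), and then rules out $p_i^*+p_j^*+p_k^*<1$ and $p_i^*+p_j^*+p_k^*>1+\beta$ by exactly your demand count against the capacity $2n_i/3$ and the slack $n_k=n_i/3$; your buyer/non-buyer formulation is just the contrapositive of that case split, and Steps 1--2 are correct. The only point of divergence is your Step 3, and here you should know that the paper does not close the gap either: its proof of the overpriced case merely observes that excess demand at most $-n_i/3<-\alpha$ forces $p_k^*=0$ and then asserts the stated interval, which when $p_i^*+p_j^*>1+\beta$ is an interval of negative numbers that does not contain $0$. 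So the difficulty you isolate is genuine, but it is a looseness in the lemma's statement (the gadget is meant to realize $1-\min(p_i^*+p_j^*,1)=\max(1-p_i^*-p_j^*,0)$, consistent with $f_{G_+}$ being the truncated sum) rather than a missing step that the paper supplies and you lack; you need not manufacture the ``global invariant'' of your Step 3 to match what is actually proved.
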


\begin{proof}
It is easy to observe that the students' preferences are as follows:
\begin{align*}
    \{c_i,c_j,c_k\} \succcurlyeq \{c_i,c_j,c_0\} \succcurlyeq &\{c_i,c_j\} \succcurlyeq \{c_i,c_0\} \succcurlyeq \{c_j,c_0\} \succcurlyeq \{c_0\}\\
    \succcurlyeq &\{c_i,c_k\} \succcurlyeq \{c_i\} \succcurlyeq \{c_j,c_k\} \succcurlyeq \{c_j\} \succcurlyeq \{c_k\} \succcurlyeq \emptyset
\end{align*}
Because course $c_0$ is always affordable, the students must get a bundle that is worse than $\{c_0\}$. Next, we discuss all possibilities of prices $p^*_i,p^*_j,p^*_k$ of $c_i,c_j,c_k$ in any $(\alpha,\beta)$-CEEI.
\begin{itemize}
    \item If $p^*_i+p^*_j+p^*_k<1$, the students will be all allocated the bundle $\{c_i,c_j,c_k\}$. Therefore, there are at least $n_i$ students allocated $c_k$, and thus the excess demand of $c_k$ is at least $n_i/3>\alpha$, which is impossible in any $(\alpha,\beta)$-CEEI. 
    \item If $p^*_i+p^*_j+p^*_k>1+\beta$, because the bundle $\{c_0\}$ is always affordable and better than any bundle involving $c_k$, no students will be allocated $c_k$. Therefore, there are at most $n_i/3$ students allocated $c_k$, and thus the excess demand of $c_k$ is at most $-n_i/3<-\alpha$, which is only possible in any $(\alpha,\beta)$-CEEI when $p^*_k=0$. 
\end{itemize}
Hence, in any $(\alpha,\beta)$-CEEI,
    $$p_k^* \in [1-p_i^*-p_j^*,1-p_i^*-p_j^*+\beta]~.$$
\end{proof}

Note that this gadget cannot work when the output course is identical to any of the input courses. To fix this issue, for any (possibly identical) input variables $i,j$ and output variable $k$, we can use two additional auxiliary courses $c_x,c_y$ and three above gadgets as follows:
\begin{itemize}
    \item input $c_i,c_j$ and output $c_x$, after which $p^*_x\in [1-p^*_i-p^*_j, 1-p^*_i-p^*_j+\beta], n_x=n_i/3$;
    \item input $c_x,c_0$ and output $c_y$, after which $p^*_y\in [1-p^*_x,1-p^*_x+\beta]\subseteq [p^*_i+p^*_j-\beta, p^*_i+p^*_j+\beta], n_y=n_x/3=n_i/9$;
    \item input $c_y,c_0$ and output $c_k$, after which $p^*_k\in [1-p^*_y, 1-p^*_y+\beta] \subseteq [1-p^*_i-p^*_j-\beta, 1-p^*_i-p^*_j+2\beta], n_k=n_y/3=n_i/27$.
\end{itemize}

However, because $n_k<n_i$ in the above construction, the gadgets cannot be directly integrated into cyclic circuits. To integrate the gadgets, we can use the following \emph{course-size amplification gadget}, which can approximately preserve the price of the input course and allow twice as many additional students to want the output course. 

\begin{lemma}[Course-size amplification gadget]
    \label{lem:course-size-amplification-gadget}
    Let $n_i > 2\alpha$ and suppose that the economy contains the following courses:
    \begin{itemize}
        \item input course $c_i$;
        \item output course $c_{j}$ with capacity $q_{j}=2.5n_i$ (at most $n_{j}=2n_i$ other students want $c_{j}$);
        \item interior course $c_0$, and course $c_1$ with capacity $q_1=3.5n_i$;
    \end{itemize}
    and the following sets of students:
    \begin{itemize}
        \item $n_i$ students with:
        \begin{itemize}
            \item 1 conflict: course $c_0$ and $c_1$ can not be simultaneously allocated to a student, i.e. $valid(\vc{x})=1$ if and only if $x_0+x_1\leq 1$;
            \item 1 requirement: the total number of allocated courses among $c_i,c_0$ is at least 1, i.e. $req(\vc{x})=1$ if and only if $x_i+x_0\geq 1$;
            \item the utility function is:
            \begin{align*}
                u(\vc{x}) = \begin{cases}
                    4x_i + 2x_1 + x_0 + 8 & valid(\vc{x})=1, req(\vc{x}) = 1~,\\
                    \hfill 4x_i + 2x_1 + x_0 \hfill & valid(\vc{x})=1, req(\vc{x}) = 0~,\\
                    \hfill -\infty\hfill & valid(\vc{x}) = 0~.
                \end{cases}
            \end{align*}
        \end{itemize}
        \item $n_1=3n_i$ students with: 
        \begin{itemize}
            \item 1 conflict: course $c_0$ and $c_j$ can not be simultaneously allocated to a student, i.e. $valid(\vc{x})=1$ if and only if $x_0+x_j\leq 1$;
            \item 1 requirement: the total number of allocated courses among $c_0,c_1$ is at least 1, i.e. $req(\vc{x})=1$ if and only if $x_0+x_1\geq 1$;
            \item the utility function is:
            \begin{align*}
                u(\vc{x}) = \begin{cases}
                    4x_1 + 2x_j + x_0 + 8 & valid(\vc{x})=1, req(\vc{x}) = 1~,\\
                    \hfill 4x_1 + 2x_j + x_0 \hfill & valid(\vc{x})=1, req(\vc{x}) = 0~,\\
                    \hfill -\infty\hfill & valid(\vc{x}) = 0~.
                \end{cases}
            \end{align*}
        \end{itemize}
    \end{itemize}
    Then in any $(\alpha,\beta)$-CEEI,
    $$p_{j}^* \in [p_i^*-\beta,p_i^*+\beta]~.$$
\end{lemma}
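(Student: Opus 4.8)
The plan is to reduce the claim to two ``NOT-gate'' price relations that compose into a COPY, mirroring the case analysis used for the NOT-SUM gadget (Lemma~\ref{lem:not-sum-gadget}). Write $p_i,p_1,p_j$ for the equilibrium prices of $c_i,c_1,c_j$, recall that the auxiliary course $c_0$ has $p_0=0$, and recall that all student budgets lie in $[1,1+\beta]$. As a first step I would record the induced preferences. A direct enumeration of the valid bundles (using $B=8>\|\vc w\|_1=7$ together with the conflict and requirement constraints) shows that each of the $n_i$ students in the first group has preference order $\{c_i,c_1\}\succ\{c_i,c_0\}\succ\{c_i\}\succ\{c_0\}\succ\{c_1\}\succ\emptyset$, so such a student demands $c_1$ iff $\{c_i,c_1\}$ (cost $p_i+p_1$) is affordable; symmetrically each of the $n_1=3n_i$ students in the second group has order $\{c_1,c_j\}\succ\{c_1,c_0\}\succ\{c_1\}\succ\{c_0\}\succ\{c_j\}\succ\emptyset$, so such a student demands $c_1$ iff $\{c_1,c_0\}$ (cost $p_1$) is affordable and demands $c_j$ iff $\{c_1,c_j\}$ (cost $p_1+p_j$) is affordable. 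I would also invoke the standing invariant of the construction that every positively priced course has price at most $1+\beta$ (otherwise no budget could afford it and, since every course here has capacity exceeding $\alpha$, it would be strictly under-demanded, contradicting the CEEI condition); in particular $p_i,p_1,p_j\le 1+\beta$.

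The core of the argument is two sub-claims, proved from the clearing bounds $z_c\le\alpha$ for every course and $z_c\ge-\alpha$ for positively priced courses, together with $n_i>2\alpha$ (hence $\alpha<\tfrac12 n_i$). \textbf{Sub-claim A: } $p_i+p_1\in(1,1+\beta]$. The point is that the capacity $q_1=3.5n_i$ of $c_1$ sits strictly between the second group's size $3n_i$ and the total $4n_i$. For the lower bound: if $p_1\le1$ then all $3n_i$ second-group students afford $c_1$, so $z_{c_1}\le\alpha$ forces at most $0.5n_i+\alpha<n_i$ first-group students to take $c_1$; since $\{c_i,c_1\}$ is their top bundle, some first-group student finds it unaffordable, giving $p_i+p_1>1$; if $p_1>1$ then $p_i+p_1>1$ trivially. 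For the upper bound: when $p_1>0$ the bound $z_{c_1}\ge-\alpha$ yields at least $0.5n_i-\alpha>0$ first-group demanders of $c_1$, so $\{c_i,c_1\}$ is affordable for someone and $p_i+p_1\le1+\beta$; when $p_1=0$, the over-demand bound forbids all $n_i$ first-group students taking $c_1$, forcing $p_i>1$, and then $p_i+p_1=p_i\le1+\beta$ by the invariant. \textbf{Sub-claim B: } if $p_j>0$ then $p_1+p_j\in(1,1+\beta]$. Using the hypothesis that at most $n_j=2n_i$ external students demand $c_j$, the two-sided clearing of $c_j$ (capacity $2.5n_i$) forces the number of second-group students taking $c_j$ into $[0.5n_i-\alpha,\,2.5n_i+\alpha]$; the positive lower endpoint shows $\{c_1,c_j\}$ is affordable for someone ($p_1+p_j\le1+\beta$), while the upper endpoint, being $<3n_i$, shows some second-group student cannot afford it ($p_1+p_j>1$).

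Finally I would combine. When $p_j>0$, subtracting Sub-claim A from Sub-claim B gives $p_j-p_i=(p_1+p_j)-(p_i+p_1)\in(-\beta,\beta)$, i.e.\ $p_j\in[p_i-\beta,p_i+\beta]$, as required. The only remaining case is $p_j=0$: then every second-group student who affords $c_1$ also takes $c_j$ (both cost $p_1$), so the over-demand bound on $c_j$ forces fewer than $3n_i$ of them to afford $c_1$, whence $p_1>1$; combined with the upper bound $p_i+p_1\le1+\beta$ from Sub-claim A this gives $p_i<\beta$, so $p_j=0\in[p_i-\beta,p_i+\beta]$.

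The step I expect to be the main obstacle is Sub-claim A, because the clearing of $c_1$ mixes demand from both student groups, and isolating the first group's contribution—the part encoding the $p_1\approx 1-p_i$ relation—is precisely what the capacity choice $3n_i<q_1=3.5n_i<4n_i$ is engineered to enable; the argument must branch on whether $p_1\le1$ (second group fully present) or $p_1>1$. The boundary cases $p_1=0$ and $p_j=0$, where only the one-sided over-demand bound is available, are the other delicate points and are exactly what force reliance on the global price-boundedness invariant $p_i\le1+\beta$.
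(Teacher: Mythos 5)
Your proof is correct and follows essentially the same route as the paper's: you derive $p_i^*+p_1^*\in(1,1+\beta]$ from the clearing of $c_1$ (whose capacity $3.5n_i$ is deliberately wedged between $3n_i$ and $4n_i$) and $p_1^*+p_j^*\in(1,1+\beta]$ from the clearing of $c_j$ (capacity wedged between $2n_i$ and $3n_i$), then subtract the two relations. The only difference is that you are somewhat more careful than the paper about the degenerate boundary cases $p_1^*=0$ and $p_j^*=0$, where only the one-sided over-demand bound is available and an external price-boundedness invariant must be invoked; the paper's own write-up glosses over these by treating the under-demand bound as unconditionally binding.
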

    
\begin{proof}
    It is easy to observe that the preferences of the first $n_i$ students and the last $n_1$ students are respectively as follows:
    \begin{align*}
        &\{c_i,c_1\} \succcurlyeq \{c_i,c_0\}\succcurlyeq\{c_i\}\succcurlyeq\{c_0\} \succcurlyeq \{c_1\} \succcurlyeq \emptyset~,\\
        &\{c_1,c_j\} \succcurlyeq \{c_1,c_0\}\succcurlyeq\{c_1\}\succcurlyeq\{c_0\} \succcurlyeq \{c_j\} \succcurlyeq \emptyset~.
    \end{align*}
    Because course $c_0$ is always affordable, the students must get a bundle that is no worse than $\{c_0\}$. Next, we start by discussing the prices $p^*_i,p^*_1$ of $c_i,c_1$ in any $(\alpha,\beta)$-CEEI.
    \begin{itemize}
        \item If $p^*_i+p^*_1<1$, the first $n_i$ students will be all allocated $\{c_i,c_1\}$. Further, because $p^*_1<1$, the other $3n_i$ students will all be allocated course $c_1$. Therefore, there are $4n_i$ students allocated $c_1$, and thus the excess demand of $c_1$ is $n_i/2>\alpha$, which is impossible in any $(\alpha,\beta)$-CEEI. 
        \item If $p^*_i+p_1>1+\beta$, none of the first $n_i$ students will be allocated course $c_1$. Therefore, there are at most $3n_i$ students allocated $c_1$, and thus the excess demand of $c_1$ is at most $-n_i/2<-\alpha$, which is impossible in any $(\alpha,\beta)$-CEEI.
    \end{itemize}
    Hence, $p_1^*\in [1-p_i^*,1-p_i^*+\beta]$. Finally, we shall show $p_{j}^* \in [1-p_1^*,1-p_1^*+\beta]$ to finish the proof.
    \begin{itemize}
        \item If $p_{j}^* + p_1^* > 1 + \beta$, none of the last $n_1$ students will be allocated course $c_j$. Therefore, there are at most $n_{j}$ students allocated $c_j$, and thus the excess demand of $c_j$ is at most $-n_i/2<-\alpha$. Because $p_1^*\leq 1+\beta$ in any $(\alpha,\beta)$-CEEI, $p_j^*>0$ and this is impossible in any $(\alpha,\beta)$-CEEI.
        \item If $p_j^*+p_1^*<1$, all the last $n_1$ students will be allocated the bundle $\{c_{1},c_j\}$. Therefore, there are at least $3n_i$ students allocated $c_j$, and thus the excess demand of $c_j$ is at least $n_i/2>\alpha$, which is impossible in any $(\alpha,\beta)$-CEEI. \qedhere
    \end{itemize}
\end{proof}

\section{Supplementary figures}
\label{app:figs}
\begin{figure}[H]
\centering
\includegraphics[width=.48\textwidth]{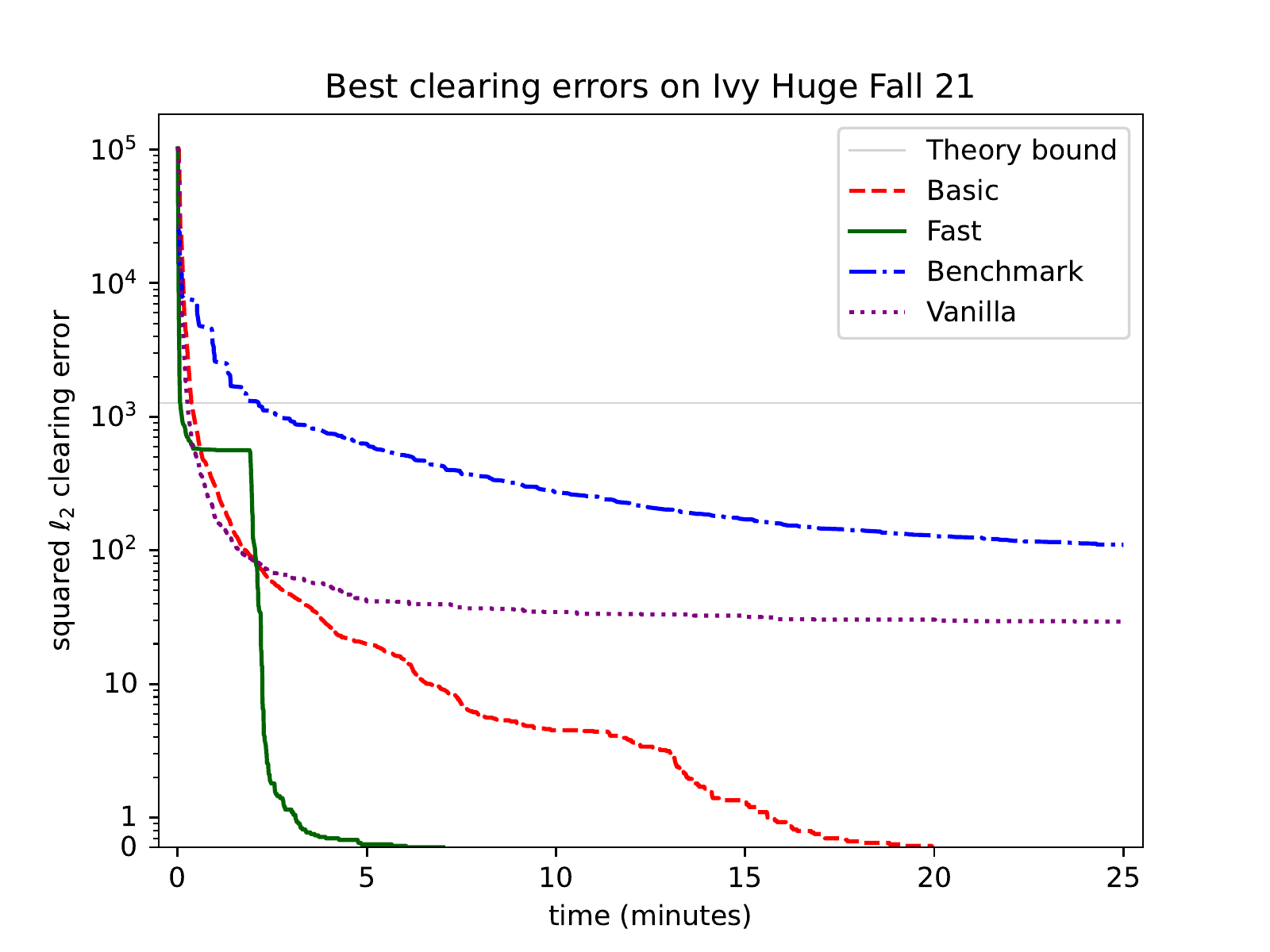}
\includegraphics[width=.48\textwidth]{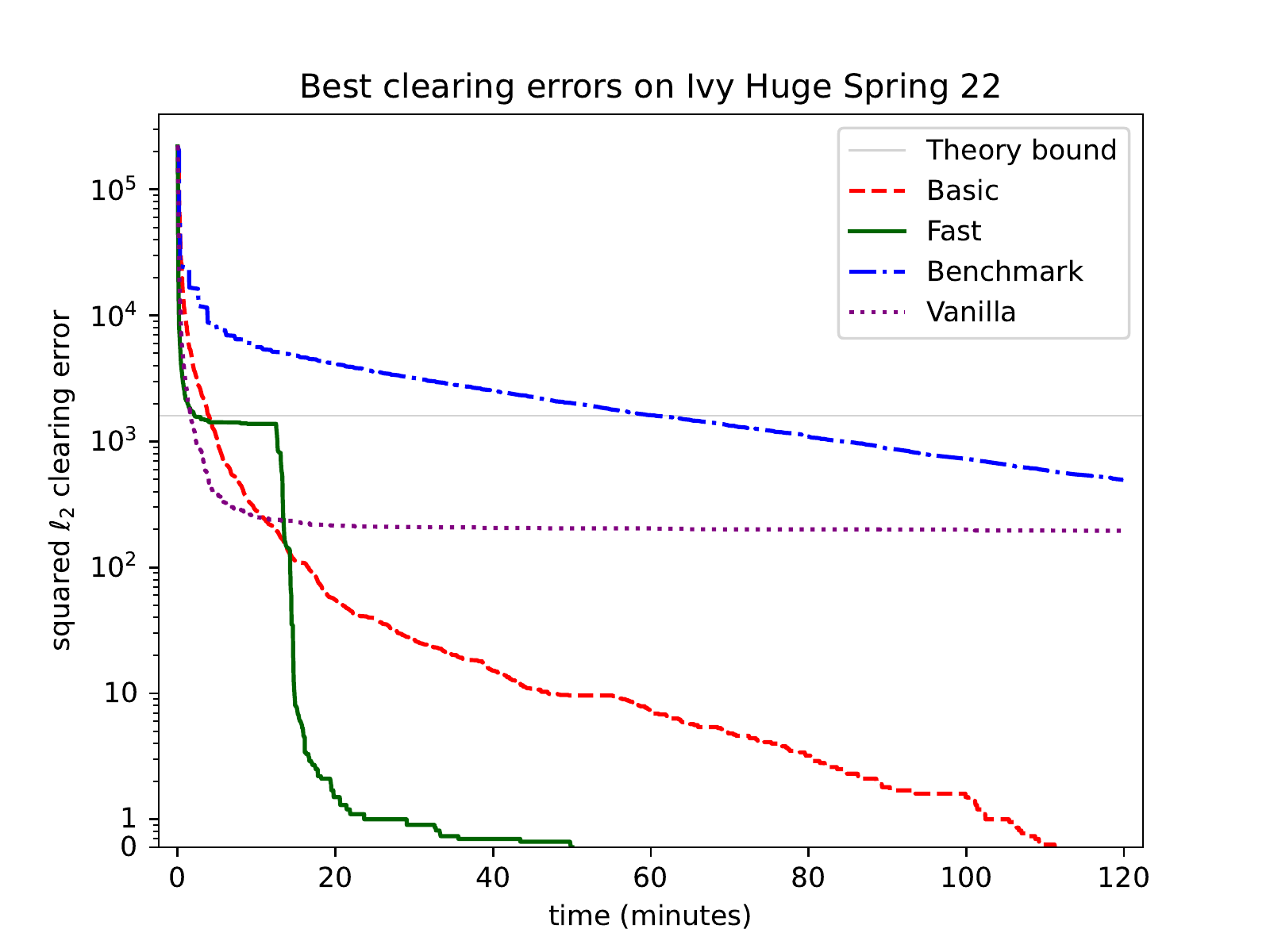}
\\
\includegraphics[width=.48\textwidth]{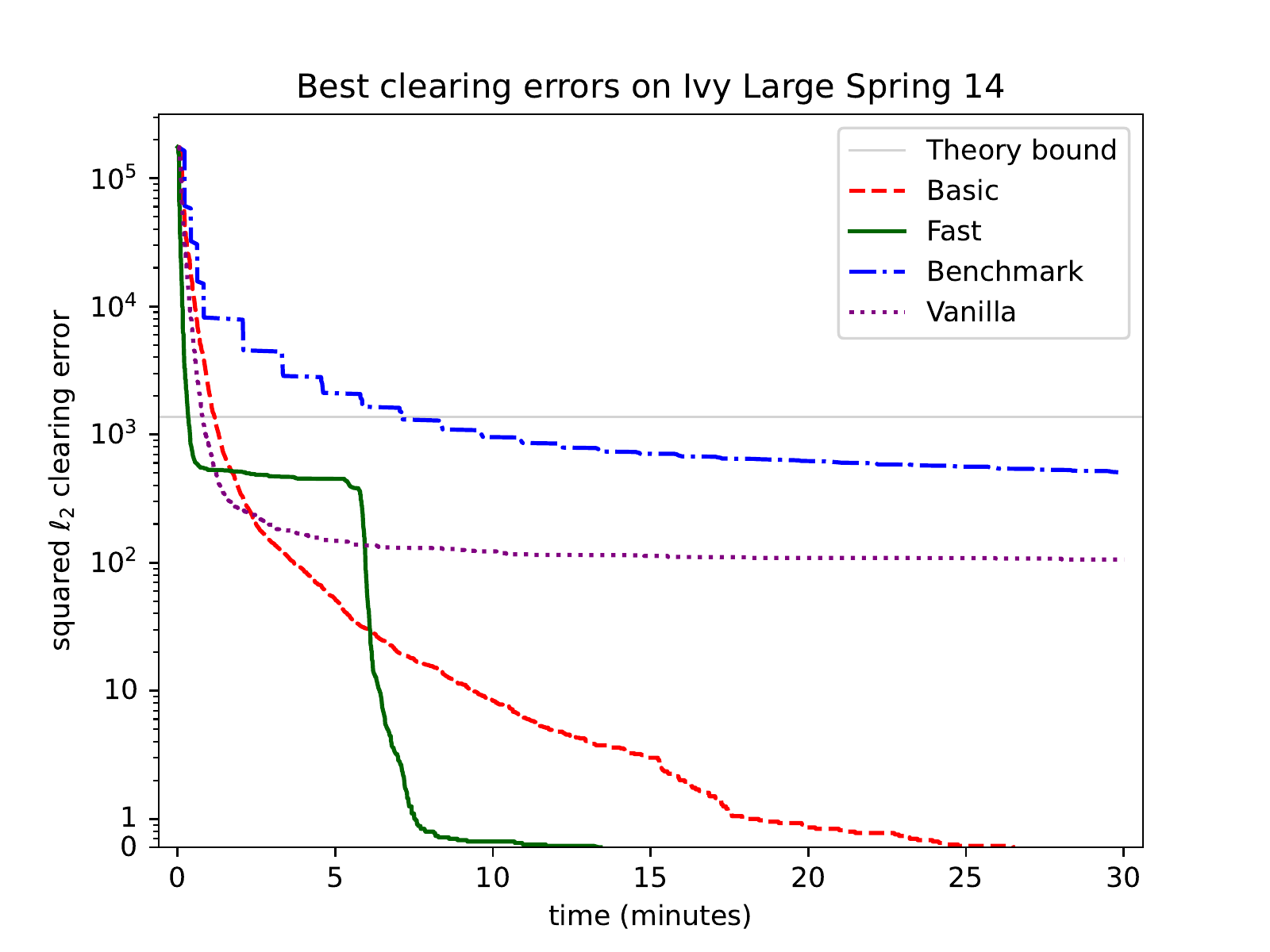}
\caption{For each of the algorithms tested and time $t$, we plot the best clearing error obtained by the algorithm up to time $t$.}
\label{fig:progress-of-algos-error-app}
\end{figure}

\begin{figure}[H]
    \centering
    \includegraphics[width=.48\textwidth]{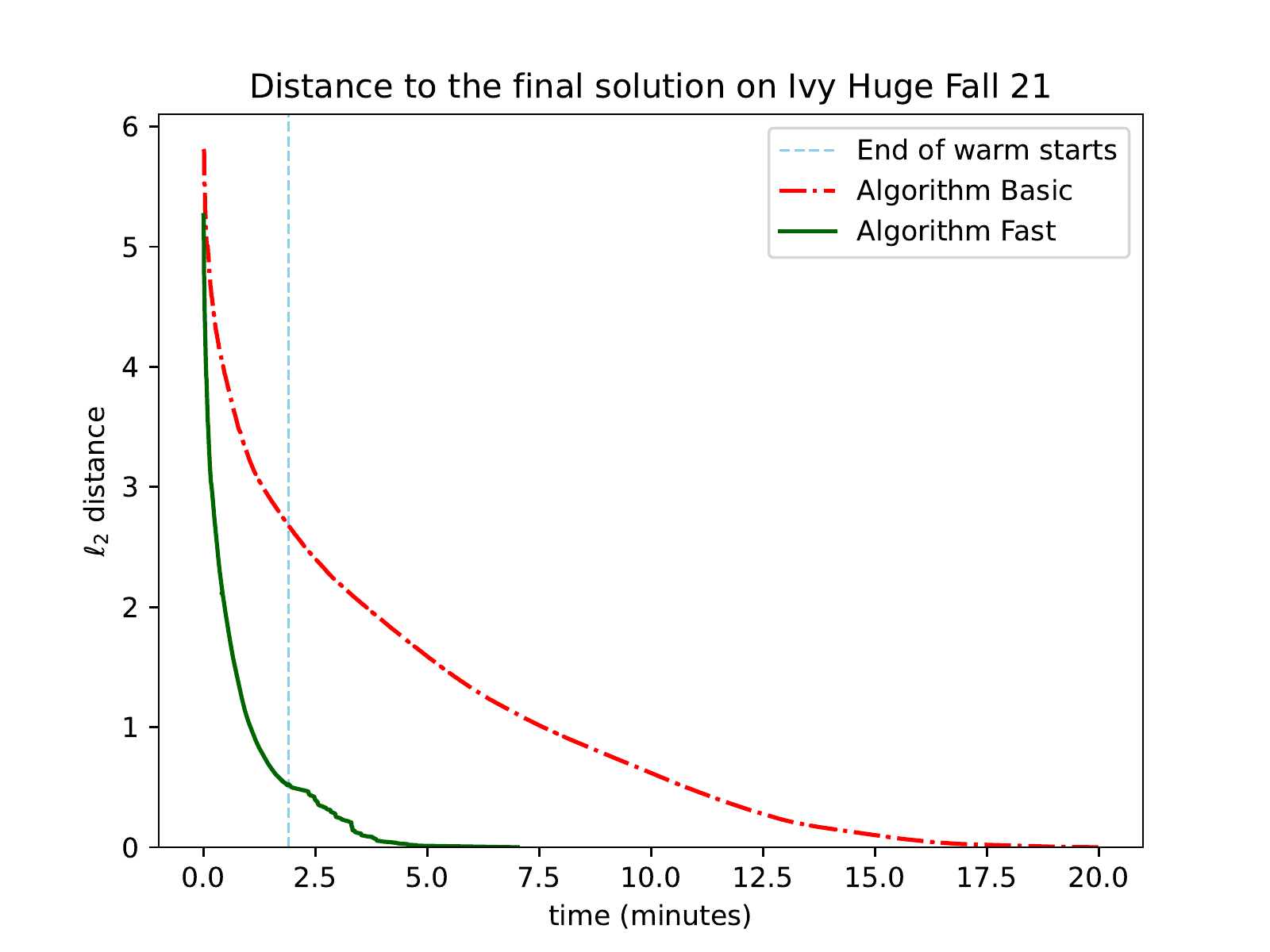}
    \includegraphics[width=.48\textwidth]{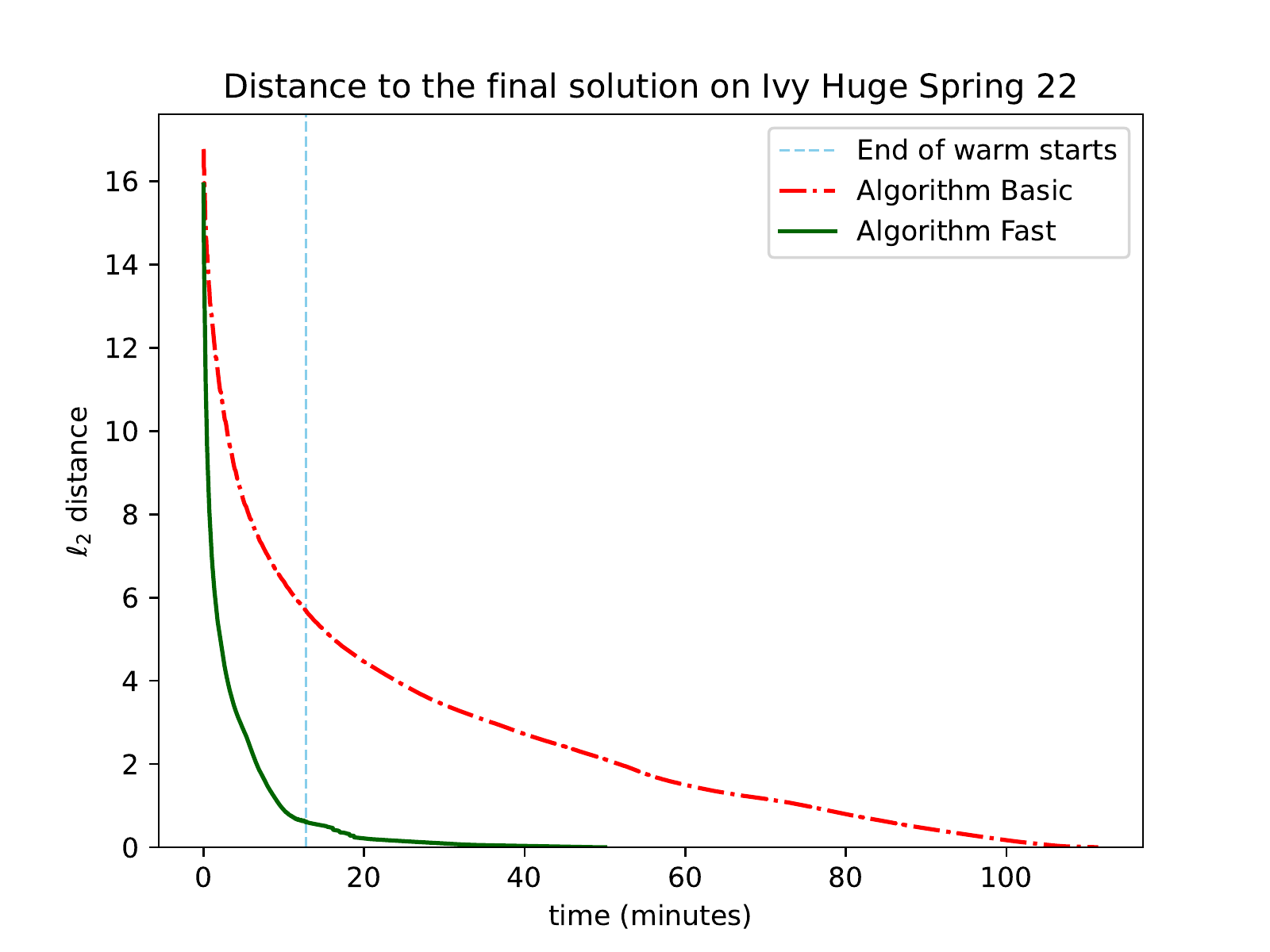}
    \\
    \includegraphics[width=.48\textwidth]{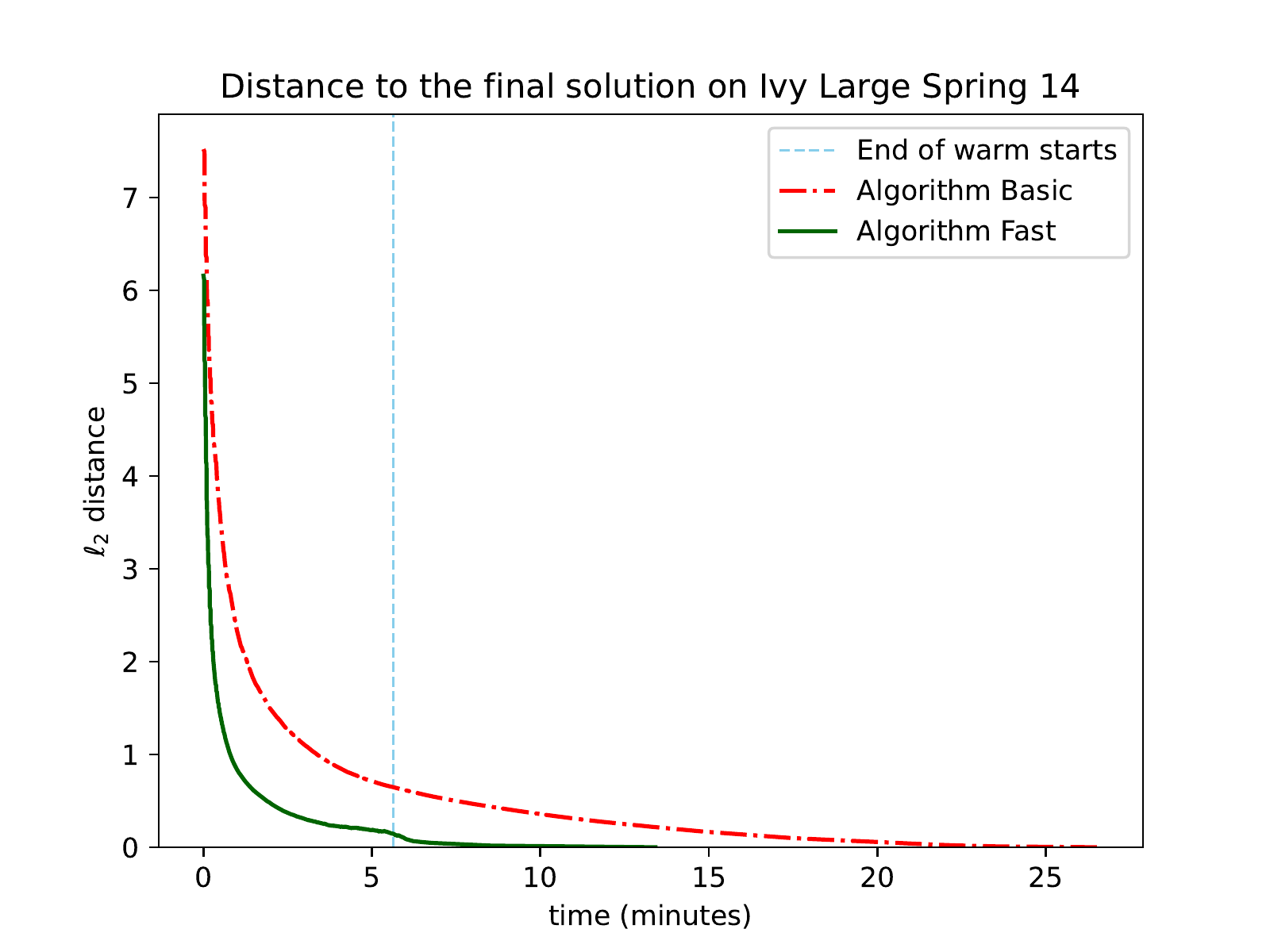}
    \caption{The distance between the found solution and the final solution with respect to the running time.}
    \label{fig:progress-of-algos-distance-app}
\end{figure}

\end{document}